\newtheorem{rem}{Remark}
\newtheorem{theorem}{Theorem}
\newtheorem{lemma}[theorem]{Lemma}
\newtheorem{corollary}[theorem]{Corollary}
\newtheorem{example}{Example}
\newcommand{\gf}{{\mathrm{GF}}}
\newcommand{\support}{{\mathrm{Suppt}}}
\newcommand{\re}{{\mathtt{Re}}}
\newcommand{\wt}{{\mathtt{wt}}}
\newcommand{\w}{{\mathtt{w}}}
\newcommand{\C}{{\mathcal{C}}}
\newcommand{\ba}{{\mathbf{a}}}
\newcommand{\bb}{{\mathbf{b}}}
\newcommand{\bc}{{\mathbf{c}}}
\newcommand{\bbf}{{\mathbf{f}}}
\newcommand{\bbv}{{\mathbf{v}}}
\newcommand{\bbu}{{\mathbf{u}}}
\newcommand{\bzero}{{\mathbf{0}}}
\begin{document}

\begin{frontmatter}



\title{Minimal Linear Codes over Finite Fields\tnotetext[fn1]{C. Ding's research was supported by
The Hong Kong Research Grants Council, Proj. No. 16300415.
Z. Zhou was supported by
the Natural Science Foundation of China under Grants  11571285  and 61672028, and also the Sichuan Provincial Youth Science and Technology Fund under Grant 2015JQ0004 and 2016JQ0004. }}


\author[cding]{Ziling Heng}
\ead{zilingheng@163.com} 
\author[cding]{Cunsheng Ding}
\ead{cding@ust.hk}
\author[zz]{Zhengchun Zhou}
\ead{zzc@home.swjtu.edu.cn}

\cortext[zz]{Z. Zhou is the Corresponding author}

\address[cding]{Department of Computer Science
                                                  and Engineering, The Hong Kong University of Science and Technology,
                                                  Clear Water Bay, Kowloon, Hong Kong, China}                                                  
\address[zz]{School of Mathematics, Southwest Jiaotong University,
Chengdu, 610031, China}

\begin{abstract}
As a special class of linear codes, minimal linear codes have important applications in secret sharing and secure two-party computation. Constructing minimal linear codes with new and desirable parameters has been an interesting research topic in coding
theory and cryptography. Ashikhmin and Barg showed that  $w_{\min}/w_{\max}> (q-1)/q$ is a sufficient condition  for a linear code over the finite field $\gf(q)$ to be minimal, where $q$ is a prime power, $w_{\min}$ and $w_{\max}$ denote the minimum and maximum nonzero weights in the code, respectively. The first objective of this paper is to present a  sufficient and necessary condition for linear codes over finite fields to be minimal. 
The second objective of this paper is to construct an infinite family of ternary minimal linear codes satisfying $w_{\min}/w_{\max}\leq  2/3$. To the best of our knowledge, this is the first infinite family of nonbinary minimal linear codes violating  Ashikhmin and Barg's condition.
\end{abstract}

\begin{keyword}
Linear code, minimal  code, minimal vector, secret sharing

\MSC 94C10 \sep 94B05 \sep 94A60

\end{keyword}

\end{frontmatter}

\section{Introduction}

Let $q$ be a prime power and $\gf(q)$ denote the finite field with $q$ elements.
An  $[n,\, k,\, d]$ code $\C$ over $\gf(q)$ is a $k$-dimensional subspace of $\gf(q)^n$ with minimum
(Hamming) distance $d$.
Let $A_i$ denote the number of codewords with Hamming weight $i$ in a code
$\C$ of length $n$. The {\em weight enumerator} of $\C$ is defined by
$
1+A_1z+A_2z^2+ \cdots + A_nz^n.
$
The sequence $(1, A_1, A_2, \cdots, A_n)$ is called the \emph{weight distribution} of the code $\C$.

The support of a vector $\bbv=(v_1,v_2,\ldots,v_n) \in \gf(q)^n$, denoted by $\support(\bbv)$, is defined by
$$
\support(\bbv)=\{1 \leq i \leq n: v_i \neq 0 \}.
$$
The vector $\bbv$ is called the characteristic vector or the incidence vector of the set $\support(\bbv)$.
A vector $\bbu \in \gf(q)^n$ covers another vector $\bbv \in \gf(q)^n$ if $\support(\bbu)$ contains $\support(\bbv)$.
We write $\bbv \preceq \bbu$ if $\bbv$ is covered by $\bbu$, and $\bbv \prec \bbu$ if $\support(\bbv)$ is a proper
subset of $\support(\bbu)$. A codeword $\bbu$ in a linear code $\C$ is said to be \emph{minimal} if $\bbu$ covers only
the codeword $a\bbu$ for all $a \in \gf(q)$, but no other codewords in $\C$. A linear code
$\C$ is said to be \emph{minimal} if every codeword in $\C$ is minimal.

Minimal linear codes have interesting applications in secret sharing \cite{CDY05, Massey93,YD06} and secure two-party computation \cite{ABCH95,CMP03}, and could be decoded with a minimum distance decoding method \cite{AB98}.
Searching for minimal linear codes has been an interesting research topic in coding
theory and cryptography. The following sufficient condition
for a linear code to be minimal is due to Ashikhmin and Barg \cite{AB98}.

\begin{lemma}[Ashikhmin-Barg]\label{lem-AB}
A linear code $\C$ over $\gf(q)$ is minimal if 
$$
\frac{w_{\min}}{w_{\max}}> \frac{q-1}{q},$$ 
where $w_{\min}$ and $w_{\max}$ denote the minimum and maximum nonzero Hamming weights in
the code $\C$, respectively.
\end{lemma}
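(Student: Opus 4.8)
\medskip\noindent\textbf{Proof proposal.}\quad The plan is to establish the contrapositive: if $\C$ is not minimal, then $w_{\min}/w_{\max}\le (q-1)/q$. So assume there exist codewords $\bbu,\bbv\in\C$ such that $\bbu$ covers $\bbv$ but $\bbv\neq a\bbu$ for every $a\in\gf(q)$. First I would record the elementary structural facts that make the rest go through: since $\bzero=0\cdot\bbu$, we have $\bbv\neq\bzero$; since the zero codeword covers only $\bzero$, we have $\bbu\neq\bzero$; and $\bbu$ cannot be a scalar multiple of $\bbv$ either, because $\bbu=c\bbv$ with $c\neq 0$ would give $\bbv=c^{-1}\bbu$. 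Write $S=\support(\bbu)$ and $T=\support(\bbv)$, so that $T\subseteq S$ and both sets are nonempty.

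The core of the argument is to evaluate the sum $\sum_{a\in\gf(q)^*}\wt(\bbu-a\bbv)$ in two different ways. On one hand, because $\bbu$ is not a scalar multiple of $\bbv$, none of the $q-1$ vectors $\bbu-a\bbv$ with $a\neq 0$ is the zero codeword, so each has weight at least $w_{\min}$, and hence the sum is at least $(q-1)\,w_{\min}$. On the other hand, I would compute the sum coordinate by coordinate, splitting the coordinates into three types: an index $i\notin S$ contributes $0$; an index $i\in S\setminus T$ has $(\bbu-a\bbv)_i=u_i\neq 0$ for all $a$, so it contributes $q-1$; and an index $i\in T$ has $u_i\neq 0$ and $v_i\neq 0$, so $(\bbu-a\bbv)_i=0$ for exactly the single value $a=u_iv_i^{-1}\in\gf(q)^*$, contributing $q-2$. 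Summing over all indices and using $T\subseteq S$,
$$
\sum_{a\in\gf(q)^*}\wt(\bbu-a\bbv)=(q-1)|S\setminus T|+(q-2)|T|=(q-1)\,\wt(\bbu)-\wt(\bbv).
$$

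Combining the two evaluations with the obvious bounds $\wt(\bbu)\le w_{\max}$ and $\wt(\bbv)\ge w_{\min}$ gives
$$
(q-1)\,w_{\min}\ \leq\ (q-1)\,\wt(\bbu)-\wt(\bbv)\ \leq\ (q-1)\,w_{\max}-w_{\min},
$$
and rearranging yields $q\,w_{\min}\le (q-1)\,w_{\max}$, i.e.\ $w_{\min}/w_{\max}\le (q-1)/q$, which is precisely the contrapositive we wanted. I do not expect a genuine obstacle in this argument; the only places that need care are the coordinatewise bookkeeping (keeping the three types of indices straight and using $T\subseteq S$ so that $|S\setminus T|+|T|=|S|=\wt(\bbu)$), and the small but decisive point that one should sum over $a\in\gf(q)^*$ rather than over all of $\gf(q)$ — it is exactly the resulting $q-2$ (rather than $q-1$) for indices in $T$ that makes the threshold come out to $(q-1)/q$ instead of the weaker ratio $q/(q+1)$.
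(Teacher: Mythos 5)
Your proof is correct. Note that the paper itself offers no proof of this lemma --- it is quoted from Ashikhmin and Barg's paper \cite{AB98} as a known result --- so there is no in-paper argument to match line by line. That said, the identity at the heart of your argument,
$$
\sum_{a\in\gf(q)^*}\wt(\bbu-a\bbv)=(q-1)\,\wt(\bbu)-\wt(\bbv)\quad\text{whenever }\support(\bbv)\subseteq\support(\bbu),
$$
is exactly (one direction of) the paper's Lemma~\ref{lem-sn-condition}, which the authors prove by a different route: they introduce the auxiliary vectors $c\ba\cap\bb$ and combine Lemma~\ref{lem-iff} with the weight identity of Lemma~\ref{lemma-key}, because they need the \emph{equivalence} (both directions) in order to obtain the necessary-and-sufficient condition of Theorem~\ref{thm-main1}. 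Your direct coordinate-by-coordinate count is cleaner for the single implication needed here, and your bookkeeping is sound: the three classes of indices are handled correctly, the exclusion of $\bbu=a\bbv$ guarantees each $\bbu-a\bbv$ is a nonzero codeword so the lower bound $(q-1)w_{\min}$ is legitimate, and the final rearrangement to $q\,w_{\min}\le(q-1)\,w_{\max}$ is the right contrapositive. Your closing observation about why one must sum over $\gf(q)^*$ rather than $\gf(q)$ is also accurate. In short: correct, self-contained, and in effect a rediscovery of the paper's key lemma specialized to the direction that yields the Ashikhmin--Barg sufficient condition.
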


With the help of Lemma \ref{lem-AB}, a number of families of minimal linear codes
with  $w_{\min}/w_{\max}> (q-1)/q$ have been reported in the literature (see, \cite{CDY05}, \cite{Ding15}, \cite{DLLZsurvey}, \cite{YD06}, for example). Sporadic examples
in \cite{CMP03} show that Ashikhmin-Barg's condition is not necessary for linear codes to be minimal.
However, no infinite family of minimal linear codes with $w_{\min}/w_{\max} \leq (q-1)/q$
was found until the  breakthrough in \cite{CH17}, where an infinite family of such
binary codes was discovered. Inspired by the work in \cite{CH17}, the authors of the present paper
gave a further study of binary minimal linear codes \cite{DHZ}. Specifically,
a necessary and sufficient condition for binary linear codes to be minimal was derived 
in \cite{DHZ}. With this
new condition, three infinite families
of minimal binary linear codes with $w_{\min}/w_{\max} \leq 1/2$ were obtained from a general construction in \cite{DHZ}.

The first objective of this paper is to present a sufficient and necessary condition for  linear codes over finite fields to be minimal, which generalizes the result about the binary case given in \cite{DHZ}. The second objective of this paper is to construct an infinite family of ternary minimal linear codes with  $w_{\min}/w_{\max}< 2/3$. To the best of our knowledge, this is the first
infinite family of nonbinary minimal linear codes violating the Ashikhmin-Barg condition.

The rest of this paper is organized as follows. In Section \ref{sec-krawtchouk}, we recall basic properties of Krawtchouk polynomials which will be needed in the sequel.
In Section \ref{sec-condition}, we present a new sufficient and necessary condition for linear codes over finite fields to be minimal. In Section \ref{sec-construction},
we use the Walsh spectrum of generalized Boolean functions to characterize when ternary linear codes from a general construction are minimal. We then propose a family of ternary minimal codes violating the Ashikhmin-Barg condition with this characterization. Finally, we conclude this paper and make concluding comments in Section \ref{sec-concluison}.

\section{Krawtchouk polynomials and their properties}\label{sec-krawtchouk}

Krawtchouk polynomials were introduced by Lloyd in 1957 \cite{L1957} and have wide  applications in coding theory \cite{B,FKLW,HP}, cryptography \cite{CZLH09}, and combinatorics \cite{Leven}. In this section, we only give a short introduction to Krawtchouk polynomials with their essential properties. For more information, the reader is referred to \cite{B, FKLW, Leven, L1957}.

Let $m$ be a positive integer, $q$ a positive integer and $x$ a variable taking nonnegative values. The  \emph{Krawtchouk polynomial} (of degree $t$ and with parameters $q$ and $m$) is defined by
\begin{eqnarray*}
K_t(x,m)=\sum_{j=0}^t (-1)^j(q-1)^{t-j} \binom{x}{j} \binom{m-x}{t-j}.
\end{eqnarray*}
Accordingly, the \emph{Lloyd polynomial} $\Psi_{k}(x,m)$ (of degree $k$ and with parameters $q$ and $m$) is given by
\begin{eqnarray}\label{eqn-lylod-general}
\Psi_{k}(x,m)=\sum_{t=0}^{k}K_t(x,m).
\end{eqnarray}

The following results will be useful in the sequel.

\begin{lemma}\cite[Lemma 3.2.1]{B}\label{lem-Lloydpolynomial} For $x,m\geq 1$,
$$
\Psi_{k}(x,m)=K_k(x-1,m-1).
$$
\end{lemma}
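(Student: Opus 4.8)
The plan is to reduce the claimed identity $\Psi_{k}(x,m) = K_k(x-1,m-1)$ to an induction on $k$, the engine of which is the one-step recurrence
\[
K_k(x,m) = K_k(x-1,m-1) - K_{k-1}(x-1,m-1),
\]
valid (as a polynomial identity in $x$) for all $m\ge 1$ and $k\ge 1$. Granting this recurrence, the base case $k=0$ is immediate, since $\Psi_{0}(x,m)=K_0(x,m)=1=K_0(x-1,m-1)$. For the inductive step one writes $\Psi_{k}(x,m)=\Psi_{k-1}(x,m)+K_k(x,m)$, applies the induction hypothesis $\Psi_{k-1}(x,m)=K_{k-1}(x-1,m-1)$, and then invokes the recurrence to conclude $K_{k-1}(x-1,m-1)+K_k(x,m)=K_k(x-1,m-1)$, which is exactly the desired equality at level $k$.

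To establish the recurrence I would start from the definition of $K_k(x,m)$ and apply Pascal's rule $\binom{x}{j}=\binom{x-1}{j}+\binom{x-1}{j-1}$ to the factor $\binom{x}{j}$ inside the defining sum. This splits $K_k(x,m)$ into two sums. Because $m-x=(m-1)-(x-1)$, the first sum is literally the definition of $K_k(x-1,m-1)$. In the second sum I would substitute $i=j-1$: the term with $i=-1$ vanishes since $\binom{x-1}{-1}=0$, the exponent of $(q-1)$ and the upper index of the other binomial both become $k-1-i$, a global sign $-1$ factors out, and what is left is precisely $-K_{k-1}(x-1,m-1)$. Adding the two pieces yields the recurrence.

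An alternative, perhaps slicker, route is via the generating function $\sum_{t\ge 0}K_t(x,m)z^t=(1-z)^x\bigl(1+(q-1)z\bigr)^{m-x}$: multiplying by $(1-z)^{-1}$ converts a coefficient sequence into its sequence of partial sums, so $\sum_{k\ge 0}\Psi_{k}(x,m)z^k=(1-z)^{x-1}\bigl(1+(q-1)z\bigr)^{m-x}$, and since $(m-1)-(x-1)=m-x$ this is exactly $\sum_{k\ge 0}K_k(x-1,m-1)z^k$; comparing the coefficients of $z^k$ finishes the proof. This needs a brief remark that the identity, being polynomial in $x$, holds for all $x$ once it holds for nonnegative integers, which is why the Pascal-rule proof is arguably cleaner to present.

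The main obstacle here is not conceptual but bookkeeping: keeping the summation ranges, the index shift $i=j-1$, and the vanishing boundary contributions straight, so that the two pieces produced by Pascal's rule are correctly matched with $K_k(x-1,m-1)$ and $-K_{k-1}(x-1,m-1)$. In particular one must check that padding or truncating a sum by a $j=0$ or $j=k$ term introduces no error, which holds because the relevant binomial coefficients are zero there. The hypotheses $x,m\ge 1$ are precisely what make the shifted arguments $x-1$ and $m-1$ legitimate.
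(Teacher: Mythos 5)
Your proposal is correct, but there is nothing in the paper to compare it against: the paper states this lemma with only a citation to Best's thesis \cite[Lemma 3.2.1]{B} and supplies no proof of its own. Your argument is a valid, self-contained verification. The key recurrence $K_k(x,m)=K_k(x-1,m-1)-K_{k-1}(x-1,m-1)$ does follow from Pascal's rule exactly as you describe: applying $\binom{x}{j}=\binom{x-1}{j}+\binom{x-1}{j-1}$ inside the defining sum, noting $m-x=(m-1)-(x-1)$, and shifting $i=j-1$ in the second piece (the $i=-1$ boundary term vanishing) gives the two summands $K_k(x-1,m-1)$ and $-K_{k-1}(x-1,m-1)$. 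The induction on $k$ then telescopes cleanly, with the base case $\Psi_0(x,m)=K_0(x,m)=1=K_0(x-1,m-1)$. Your generating-function alternative is also sound, since the coefficient of $z^k$ in $(1-z)^x\bigl(1+(q-1)z\bigr)^{m-x}$ is precisely $K_k(x,m)$ and division by $1-z$ produces partial sums; as you note, the Pascal-rule version avoids having to justify the extension from integer $x$ to a polynomial identity, though in fact both Pascal's rule and the final identity are polynomial identities in $x$ once one reads $\binom{x}{j}$ as $x(x-1)\cdots(x-j+1)/j!$, so the caveat is harmless either way. Either route would serve as a complete proof of the cited lemma.
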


\begin{lemma}\label{thm-ms21}
Let symbols and notation be as before. Then the following holds: 
\begin{itemize}
\item[(1)] $K_t(0,m)=(q-1)^{t}\binom{m}{t}.$
\item[(2)] $K_t(1,m)=(q-1)^{t}\binom{m-1}{t}-(q-1)^{t-1}\binom{m-1}{t-1}.$
\item[(3)] $K_t(m,m)=(-1)^{t}\binom{m}{t}.$
\item[(4)] \cite[Lemma 3.3.1]{B} For any integers $0\leq x,t\leq m$, $$(q-1)(m-x)K_t(x+1,m)-(x+(q-1)(m-x)-qt)K_t(x,m)+xK_t(x-1,m)=0.$$
\item[(5)] \cite[Equation (21)]{Leven} For any integers $0\leq x,t\leq m$,
$$(q-1)^{x}\binom{m}{x}K_t(x,m)=(q-1)^{t}\binom{m}{t}K_x(t,m).$$
\end{itemize}
\end{lemma}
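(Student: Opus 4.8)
The plan is to establish parts (1)--(3) by directly specializing the defining sum
\[
K_t(x,m)=\sum_{j=0}^t (-1)^j(q-1)^{t-j}\binom{x}{j}\binom{m-x}{t-j}
\]
at the values $x=0$, $x=1$, and $x=m$, and to obtain parts (4) and (5) from the classical identities already cited in the statement. The key observation in each of the first three cases is that the specialization kills all but one or two of the terms in the sum, after which the claim is immediate binomial bookkeeping.

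Concretely, for (1) I would set $x=0$: since $\binom{0}{j}=0$ for $j\geq 1$, only the $j=0$ term survives, leaving $(q-1)^t\binom{0}{0}\binom{m}{t}=(q-1)^t\binom{m}{t}$. For (2) I would set $x=1$: now $\binom{1}{j}$ vanishes unless $j\in\{0,1\}$, so the sum collapses to its $j=0$ and $j=1$ terms, which are exactly $(q-1)^t\binom{m-1}{t}$ and $-(q-1)^{t-1}\binom{m-1}{t-1}$. For (3) I would set $x=m$: then $\binom{m-x}{t-j}=\binom{0}{t-j}=0$ unless $j=t$, leaving the single term $(-1)^t(q-1)^0\binom{m}{t}\binom{0}{0}=(-1)^t\binom{m}{t}$. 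Throughout one uses the standard convention $\binom{a}{b}=0$ for $b<0$ or $b>a$, which also makes the formulas in (1)--(3) valid in the degenerate ranges $t=0$ or $t>m$.

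For (4) and (5) I would simply invoke the references attached to those items: (4) is the three-term recurrence of \cite[Lemma 3.3.1]{B}, and (5) is the reciprocity relation \cite[Equation (21)]{Leven}. If a self-contained derivation of (4) were desired, it follows by comparing the coefficients of $z^t$ on both sides of the generating-function identity $\sum_{t\geq 0}K_t(x,m)z^t=(1-z)^x\bigl(1+(q-1)z\bigr)^{m-x}$, which is the only place in the whole lemma where a genuine computation (rather than a substitution) occurs. In short, there is no real obstacle here: the argument is elementary, and the only point needing mild care is bookkeeping with the vanishing binomial coefficients in the specializations above.
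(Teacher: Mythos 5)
Your proposal is correct and matches the paper's treatment exactly: the paper gives no proof of this lemma, treating (1)--(3) as immediate specializations of the defining sum (which your substitutions at $x=0,1,m$ verify correctly) and simply citing \cite[Lemma 3.3.1]{B} and \cite[Equation (21)]{Leven} for (4) and (5). One minor caveat on your optional aside: comparing coefficients of $z^t$ in $\sum_{t\geq 0}K_t(x,m)z^t=(1-z)^x\bigl(1+(q-1)z\bigr)^{m-x}$ merely recovers the definition of $K_t(x,m)$, whereas the three-term recurrence in $x$ of part (4) requires an additional manipulation of the generating function (e.g.\ differentiating in $z$ or relating the series for $x\pm 1$ to that for $x$) together with the reciprocity of part (5)---but since you invoke the citation for (4), this does not affect the validity of your argument.
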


\begin{lemma}\cite[Equation (6)]{FKLW}\label{lem-bound1}
For $x,t\in \{0,1,2,\cdots,m\}$,
$$
K_t(x,m)\leq K_t(0,m).
$$
\end{lemma}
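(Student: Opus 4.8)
The plan is to prove the bound directly from the definition of $K_t(x,m)$ by comparing it term by term with its value at $x=0$. First I would recall from Lemma~\ref{thm-ms21}(1) that $K_t(0,m)=(q-1)^{t}\binom{m}{t}$, and from Vandermonde's convolution that $\binom{m}{t}=\sum_{j=0}^{t}\binom{x}{j}\binom{m-x}{t-j}$, which is valid for all integers $0\le x,t\le m$ with the convention $\binom{a}{b}=0$ whenever $b>a$. Combining these,
$$
K_t(0,m)=\sum_{j=0}^{t}(q-1)^{t}\binom{x}{j}\binom{m-x}{t-j}.
$$

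Next, subtracting the defining expression $K_t(x,m)=\sum_{j=0}^{t}(-1)^{j}(q-1)^{t-j}\binom{x}{j}\binom{m-x}{t-j}$ from the displayed identity, I would write
$$
K_t(0,m)-K_t(x,m)=\sum_{j=0}^{t}(q-1)^{t-j}\bigl((q-1)^{j}-(-1)^{j}\bigr)\binom{x}{j}\binom{m-x}{t-j}.
$$
It then suffices to observe that each summand is nonnegative. Indeed, the factors $(q-1)^{t-j}$, $\binom{x}{j}$ and $\binom{m-x}{t-j}$ are all nonnegative, and $(q-1)^{j}-(-1)^{j}\ge 0$ since $q\ge 2$: when $j$ is even this quantity equals $(q-1)^{j}-1\ge 0$, and when $j$ is odd it equals $(q-1)^{j}+1>0$. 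Hence $K_t(0,m)-K_t(x,m)\ge 0$, which is exactly the assertion $K_t(x,m)\le K_t(0,m)$.

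An alternative route uses the generating identity $\sum_{t\ge 0}K_t(x,m)z^{t}=(1-z)^{x}(1+(q-1)z)^{m-x}$: bounding the coefficient of $z^t$ in absolute value amounts to replacing $(1-z)^{x}$ by $(1+z)^{x}$, after which a termwise comparison (again using $(q-1)^{j}\ge 1$ for $q\ge 2$) yields the inequality; but the direct computation above is shorter and avoids the detour. There is essentially no real obstacle in this proof — the only point needing a little care is the bookkeeping of which binomial coefficients vanish, so that the summation ranges in Vandermonde's identity and in the definition of $K_t(x,m)$ line up, and this is automatic under the stated convention.
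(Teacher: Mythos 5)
The paper does not prove this lemma at all --- it simply cites it as Equation (6) of \cite{FKLW} --- so there is no in-paper argument to compare against. Your direct proof is correct and self-contained: writing $K_t(0,m)=(q-1)^{t}\binom{m}{t}$ and expanding $\binom{m}{t}$ by Vandermonde gives
$$
K_t(0,m)-K_t(x,m)=\sum_{j=0}^{t}(q-1)^{t-j}\bigl((q-1)^{j}-(-1)^{j}\bigr)\binom{x}{j}\binom{m-x}{t-j},
$$
and every summand is indeed nonnegative once $q\ge 2$. This is essentially the same termwise comparison the paper itself uses later to prove Lemma~\ref{lem-bound2}, so your argument fits the paper's style and fills in a citation with an elementary proof. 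One point worth making explicit: Section~\ref{sec-krawtchouk} nominally allows $q$ to be any positive integer, and for $q=1$ the inequality genuinely fails (e.g.\ $K_2(x,m)=\binom{x}{2}>0=K_2(0,m)$ for $x\ge 2$), so your hypothesis $q\ge 2$ is not a mere convenience but necessary; since the paper only ever applies the lemma with $q$ a prime power, this causes no harm, but you should state the restriction as part of the lemma rather than inside the proof. The generating-function aside is fine but, as you say, superfluous.
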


\begin{lemma}\label{lem-bound2}
For $x,t\in \{0,1,2,\cdots,m\}$,
$$
|K_t(x,m)|\leq (q-1)^{t}\binom{m}{t}.
$$
\end{lemma}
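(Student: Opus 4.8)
The plan is to bound $|K_t(x,m)|$ by the triangle inequality applied directly to the defining sum
\begin{eqnarray*}
K_t(x,m)=\sum_{j=0}^t (-1)^j(q-1)^{t-j} \binom{x}{j} \binom{m-x}{t-j},
\end{eqnarray*}
and then recognize the resulting sum of nonnegative terms as a Vandermonde-type convolution. First I would write
$$
|K_t(x,m)|\leq \sum_{j=0}^t (q-1)^{t-j} \binom{x}{j} \binom{m-x}{t-j}.
$$
The key observation is that, after pulling out the factor $(q-1)^t$ is not quite right because the exponent $t-j$ varies; instead I would note that each term $(q-1)^{t-j}\binom{x}{j}\binom{m-x}{t-j}$ is itself the number of ways to choose $j$ positions from a set of size $x$, $t-j$ positions from a disjoint set of size $m-x$, and assign one of $q-1$ nonzero field values to each of the latter $t-j$ positions. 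Summing over $j$ counts exactly the words of weight $t$ in $\gf(q)^m$ whose support is arbitrary on a fixed $x$-subset but carries the ``$q-1$ choices'' weighting on the complementary $(m-x)$-subset; more cleanly, it is the coefficient extraction giving $\sum_{j=0}^{t}(q-1)^{t-j}\binom{x}{j}\binom{m-x}{t-j}=$ the $t$-th term obtained by expanding $(1+u)^x\bigl(1+(q-1)u\bigr)^{m-x}$, evaluated via the Vandermonde convolution.

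The main step is therefore the combinatorial identity
$$
\sum_{j=0}^{t}(q-1)^{t-j}\binom{x}{j}\binom{m-x}{t-j}\leq (q-1)^{t}\binom{m}{t}.
$$
I would prove this by comparing coefficients of $u^t$ in $(1+u)^x\bigl(1+(q-1)u\bigr)^{m-x}$ versus $\bigl(1+(q-1)u\bigr)^{m}$. Since $0\le x\le m$, we have $(1+u)^x\le \bigl(1+(q-1)u\bigr)^x$ coefficientwise for $u\ge 0$ (because $q-1\ge 1$, each binomial coefficient in the former is at most the corresponding one in the latter), and multiplying by the common nonnegative factor $\bigl(1+(q-1)u\bigr)^{m-x}$ preserves the coefficientwise inequality. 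Hence the $u^t$-coefficient on the left is at most that of $\bigl(1+(q-1)u\bigr)^{m}$, which is exactly $(q-1)^t\binom{m}{t}$. Combining with the triangle-inequality bound above yields $|K_t(x,m)|\le (q-1)^t\binom{m}{t}$.

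The only subtlety — and the one place a reader might object — is the coefficientwise inequality $\binom{x}{j}\le (q-1)^{j}\binom{x}{j}$, which is immediate for $q\ge 2$, together with the claim that coefficientwise domination of polynomials with nonnegative coefficients is preserved under multiplication by a polynomial with nonnegative coefficients; both are elementary. An even shorter route, which I would mention as an alternative, is to invoke Lemma \ref{lem-bound1} after the sign issue is handled: by part (1) of Lemma \ref{thm-ms21}, $K_t(0,m)=(q-1)^t\binom{m}{t}$, and one checks $-K_t(x,m)\le K_t(0,m)$ as well (for instance by applying Lemma \ref{lem-bound1} to a reflected argument, or directly from the generating-function comparison above), so that $|K_t(x,m)|\le K_t(0,m)=(q-1)^t\binom{m}{t}$. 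I expect the generating-function comparison to be the cleanest and least error-prone to write out in full.
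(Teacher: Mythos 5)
Your proof is correct, and it opens exactly as the paper's does: apply the triangle inequality to the defining sum to get $|K_t(x,m)|\leq \sum_{j=0}^t (q-1)^{t-j}\binom{x}{j}\binom{m-x}{t-j}$. Where you diverge is the finish. The paper simply bounds each term via $(q-1)^{t-j}\leq (q-1)^{t}$ (valid since $q-1\geq 1$ and $0\leq j\leq t$), pulls the now-common factor $(q-1)^{t}$ outside, and applies the Vandermonde convolution $\sum_{j=0}^{t}\binom{x}{j}\binom{m-x}{t-j}=\binom{m}{t}$ --- two lines in total. You dismiss this route on the grounds that ``pulling out the factor $(q-1)^t$ is not quite right because the exponent $t-j$ varies,'' but that objection only applies to factoring it out as an \emph{equality}; as a termwise upper bound it is perfectly fine, and it is precisely what the paper does. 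Your generating-function comparison of the $u^t$-coefficients of $(1+u)^{x}\bigl(1+(q-1)u\bigr)^{m-x}$ and $\bigl(1+(q-1)u\bigr)^{m}$ is a correct and logically equivalent repackaging: the coefficientwise domination $\binom{x}{j}\leq(q-1)^{j}\binom{x}{j}$ is the same inequality $(q-1)^{t-j}\leq(q-1)^{t}$ after multiplying through, and the identity $\bigl(1+(q-1)u\bigr)^{x}\bigl(1+(q-1)u\bigr)^{m-x}=\bigl(1+(q-1)u\bigr)^{m}$ is the Vandermonde convolution in disguise. So your argument buys nothing extra over the paper's, at the cost of some length; you are also right to be cautious about the shortcut via Lemma \ref{lem-bound1}, since that lemma only supplies the upper bound $K_t(x,m)\leq K_t(0,m)$ and not the lower bound needed for the absolute value.
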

\begin{proof}
For any given $x,t\in \{0,1,2,\cdots,m\}$, we have
\begin{eqnarray*}
|K_t(x,m)|&=&\left|\sum_{j=0}^t (-1)^j(q-1)^{t-j} \binom{x}{j} \binom{m-x}{t-j}\right|\\
&\leq & \sum_{j=0}^t \left|(-1)^j(q-1)^{t-j} \binom{x}{j} \binom{m-x}{t-j}\right|\\
&\leq & (q-1)^{t}\sum_{j=0}^t  \binom{x}{j} \binom{m-x}{t-j}\\
&= &(q-1)^{t}\binom{m}{t},
\end{eqnarray*}
where the last equality followed from the Vandermonde convolution formula.
This completes the proof of this lemma.
\end{proof}


The following result follows directly from Lemmas \ref{lem-Lloydpolynomial} and \ref{lem-bound2}.

\begin{corollary}\label{cor1}
For any integers $1\leq x\leq m$ and $1\leq k \leq m-1$, we have
$$
|\Psi_{k}(x,m)|\leq (q-1)^{k}\binom{m-1}{k}.
$$
\end{corollary}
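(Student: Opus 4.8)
The statement to prove, Corollary~\ref{cor1}, asserts that $|\Psi_k(x,m)| \le (q-1)^k \binom{m-1}{k}$ for $1 \le x \le m$ and $1 \le k \le m-1$. The plan is to chain together the two results the corollary explicitly cites. By Lemma~\ref{lem-Lloydpolynomial}, since $x \ge 1$ and $m \ge 1$ (in fact $m \ge 2$ here because $k \le m-1$ and $k \ge 1$), we have the identity $\Psi_k(x,m) = K_k(x-1, m-1)$, which rewrites the Lloyd polynomial as a single Krawtchouk polynomial with shifted parameters.

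Next I would apply Lemma~\ref{lem-bound2} to the right-hand side. The hypotheses needed there are that the evaluation point and the degree both lie in $\{0,1,\dots,m'\}$ where $m'$ is the second parameter of the Krawtchouk polynomial. Here $m' = m-1$, the degree is $k$, and the evaluation point is $x-1$. Since $1 \le x \le m$ we get $0 \le x-1 \le m-1$, and since $1 \le k \le m-1$ we get $0 \le k \le m-1$, so both parameters are in the required range. Lemma~\ref{lem-bound2} then gives $|K_k(x-1,m-1)| \le (q-1)^k \binom{m-1}{k}$, and combining with the identity above yields exactly the claimed bound.

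There is really no obstacle here: the corollary is stated as an immediate consequence, and the only thing to be careful about is verifying the index ranges so that both lemmas are legitimately applicable — in particular that $m-1 \ge 1$ so the parameters make sense, which follows from $k \le m-1$ with $k \ge 1$. A clean write-up is essentially two lines: invoke Lemma~\ref{lem-Lloydpolynomial}, then Lemma~\ref{lem-bound2} with $m$ replaced by $m-1$ and $x$ by $x-1$.

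\begin{proof}
Let $x$ and $k$ be integers with $1 \le x \le m$ and $1 \le k \le m-1$; in particular $m \ge 2$. By Lemma~\ref{lem-Lloydpolynomial},
$$
\Psi_{k}(x,m) = K_k(x-1, m-1).
$$
Since $1 \le x \le m$, we have $0 \le x-1 \le m-1$, and since $1 \le k \le m-1$, we have $0 \le k \le m-1$. Hence both $x-1$ and $k$ lie in $\{0,1,2,\dots,m-1\}$, so applying Lemma~\ref{lem-bound2} with $m$ replaced by $m-1$ gives
$$
|K_k(x-1,m-1)| \le (q-1)^{k}\binom{m-1}{k}.
$$
Combining the two displays yields $|\Psi_{k}(x,m)| \le (q-1)^{k}\binom{m-1}{k}$, as desired.
\end{proof}
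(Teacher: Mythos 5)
Your proof is correct and follows exactly the route the paper intends: the paper simply states that the corollary ``follows directly from Lemmas \ref{lem-Lloydpolynomial} and \ref{lem-bound2},'' and you supply that chain (rewrite $\Psi_k(x,m)$ as $K_k(x-1,m-1)$, then apply the Krawtchouk bound with $m$ replaced by $m-1$). Your explicit verification of the index ranges is a welcome addition that the paper omits.
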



We remark that the upper bound for $|\Psi_{k}(x,m)|$ in Corollary \ref{cor1} is tight since
$$
\Psi_{k}(1,m)=K_{k}(0,m-1)=(q-1)^{k}\binom{m-1}{k}.
$$

The next result will be employed to calculate the Hamming weights of the proposed linear codes in Section \ref{sec-construction}.
\begin{lemma}\cite[Lemma 4.2.1]{B}\label{lem-krawweight}
Let $\bbu \in \mathbb{Z}_q^m$ with Hamming weight $\wt(\bbu)=i$. Then
$$
\sum_{\substack{\bbv \in \mathbb{Z}_q^m\\ \wt(\bbv)=t}} \zeta_{q}^{\bbu \cdot \bbv} = K_t(i,m),
$$
where $\zeta_q$ denotes the $q$-th primitive root of complex unity, and the inner product $\bbu\cdot \bbv$ in $\mathbb{Z}_q^m$ is defined by  $\bbu\cdot \bbv=u_1v_1+\cdots+u_mv_m$.
\end{lemma}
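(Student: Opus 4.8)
The statement is a standard identity, and the plan is to evaluate the exponential sum by factoring it over the $m$ coordinates. Write $\bbu=(u_1,\dots,u_m)$ with exactly $i$ nonzero entries; after permuting coordinates we may assume $u_1,\dots,u_i$ are the nonzero ones and $u_{i+1},\dots,u_m=0$. I would expand the sum over $\bbv$ with $\wt(\bbv)=t$ by choosing which $t$ coordinates of $\bbv$ are nonzero. If $j$ of those chosen coordinates lie in $\{1,\dots,i\}$ and the remaining $t-j$ lie in $\{i+1,\dots,m\}$, then on the latter block $\bbu\cdot\bbv$ contributes nothing to the exponent, giving a factor $(q-1)^{t-j}$ from the free nonzero choices there. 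For each of the $j$ coordinates in the first block, the corresponding partial sum is $\sum_{v\in\Z_q\setminus\{0\}}\zeta_q^{u_\ell v}$.

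The key computation is then the single-coordinate character sum: for $u\neq 0$ in $\Z_q$,
$$
\sum_{v\in\Z_q\setminus\{0\}}\zeta_q^{uv}=\sum_{v\in\Z_q}\zeta_q^{uv}-1=-1,
$$
since $v\mapsto uv$ is a bijection of $\Z_q$ when $u$ is invertible (here one uses that $q$ is prime, so $\Z_q$ is a field and every nonzero $u$ is a unit — this is exactly the setting of the lemma, cf. \cite[Lemma 4.2.1]{B}). Hence the contribution of a configuration with $j$ nonzero coordinates inside the support of $\bbu$ is $(-1)^j(q-1)^{t-j}$, and the number of such configurations is $\binom{i}{j}\binom{m-i}{t-j}$. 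Summing over $j$ from $0$ to $t$ gives
$$
\sum_{\substack{\bbv\in\Z_q^m\\ \wt(\bbv)=t}}\zeta_q^{\bbu\cdot\bbv}=\sum_{j=0}^{t}(-1)^j(q-1)^{t-j}\binom{i}{j}\binom{m-i}{t-j}=K_t(i,m),
$$
which is precisely the definition of the Krawtchouk polynomial with $x=i$.

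There is no real obstacle here; the only point requiring a little care is the bookkeeping that separates the chosen nonzero coordinates of $\bbv$ into those inside and outside $\support(\bbu)$, and matching the resulting double count $\binom{i}{j}\binom{m-i}{t-j}$ with the summand of $K_t$. One should also note explicitly that the identity does not depend on which $i$-subset is the support of $\bbu$ nor on the particular nonzero values $u_1,\dots,u_i$, since each such value is a unit and contributes the same factor $-1$; this is what makes the sum a function of $\wt(\bbu)=i$ alone.
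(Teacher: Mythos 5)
The paper offers no proof of this lemma at all: it is quoted verbatim from Best's thesis \cite[Lemma 4.2.1]{B}, so there is no in-paper argument to compare against. Your proof is the standard one and is correct in substance: fix the support of $\bbu$, split the support of $\bbv$ into the $j$ coordinates lying inside $\support(\bbu)$ and the $t-j$ lying outside, observe that each inside coordinate contributes a factor $\sum_{v\neq 0}\zeta_q^{u_\ell v}=-1$ and each outside coordinate a factor $q-1$, and the count $\binom{i}{j}\binom{m-i}{t-j}$ of configurations assembles exactly into the summand $(-1)^j(q-1)^{t-j}\binom{i}{j}\binom{m-i}{t-j}$ of $K_t(i,m)$. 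This also correctly explains why the sum depends only on $\wt(\bbu)=i$.

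One small repair is needed in the justification of the single-coordinate sum. You derive $\sum_{v\in\Z_q}\zeta_q^{u_\ell v}=0$ from the bijection $v\mapsto u_\ell v$, which requires $u_\ell$ to be a unit, and your parenthetical appeal to ``$q$ prime, so $\Z_q$ is a field'' is not actually the setting of the lemma: the statement is over $\Z_q$ for general $q$ (the paper assumes only that $q$ is a prime power, and $\Z_q$ is then not a field unless $q$ is prime; the lemma happens to be applied only with $q=3$, where your argument goes through). The clean general justification is the geometric series: for any nonzero $u_\ell\in\Z_q$ one has $\zeta_q^{u_\ell}\neq 1$, hence $\sum_{v=0}^{q-1}\bigl(\zeta_q^{u_\ell}\bigr)^{v}=\bigl(\zeta_q^{u_\ell q}-1\bigr)/\bigl(\zeta_q^{u_\ell}-1\bigr)=0$, so the punctured sum is $-1$ with no invertibility assumption. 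With that substitution your proof is complete and fully general.
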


\section{A sufficient and necessary condition for  $q$-ary linear codes to be minimal}\label{sec-condition}

In this section, we shall present a sufficient and necessary condition for   linear codes over $\gf(q)$ to be minimal.
From now on, we always assume that $q$ is a prime power. Let $\gf(q)^*=\gf(q)\setminus \{0\}$ and
$(\gf(q)^{n})^*=\gf(q)^{n}\setminus \{\bzero\}$.
For any $\ba=(a_1,a_2,\ldots,a_n),\bb=(b_1,b_2,\ldots,b_n)\in \gf(q)^{n}$, define $\ba\cap \textbf{b}$ to be the vector $(f_1,f_2,\ldots,f_n)$ as
\begin{eqnarray*}
f_i=\left\{
\begin{array}{ll}
a_i,   &      \mbox{if }a_i=b_i\in \gf(q)^{*},\\
0, & \mbox{otherwise},\\
\end{array} \right.
\end{eqnarray*}
For example, for $\ba=(1,0,2,1,0),\bb=(2,1,2,0,0)\in\gf(3)^{5}$, we have
$$
\ba\cap \bb=(0,0,2,0,0).
$$

The following three lemmas will be needed in the sequel.

\begin{lemma}\label{lem-iff}
For any $\ba,\bb\in \gf(q)^n$, $\bb\preceq \ba$ if and only if
\begin{eqnarray}\label{eqn-iff1}
\sum_{c\in \gf(q)^*}(c\ba\cap \bb)=\bb
\end{eqnarray}
if and only if
\begin{eqnarray}\label{eqn-iff2}\sum_{c\in \gf(q)^*}\wt(c\ba\cap \bb)=\wt(\bb).\end{eqnarray}
\end{lemma}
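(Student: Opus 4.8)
The plan is to prove the two equivalences by unpacking the definition of the operation $\cap$ coordinatewise and then summing over $c\in\gf(q)^*$. First I would fix $\ba=(a_1,\ldots,a_n)$ and $\bb=(b_1,\ldots,b_n)$ and analyze, for a fixed coordinate $i$, the $i$-th entry of $c\ba\cap\bb$: by definition it equals $b_i$ if $ca_i=b_i\in\gf(q)^*$ and $0$ otherwise. The key combinatorial observation is that, for a fixed $i$ with $b_i\neq 0$, there is \emph{at most one} $c\in\gf(q)^*$ with $ca_i=b_i$, namely $c=b_i a_i^{-1}$ provided $a_i\neq 0$; and such a $c$ exists if and only if $a_i\neq 0$, i.e.\ if and only if $i\in\support(\ba)$. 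Hence the $i$-th coordinate of $\sum_{c\in\gf(q)^*}(c\ba\cap\bb)$ equals $b_i$ when $b_i\neq0$ and $a_i\neq0$, equals $0$ when $b_i\neq0$ and $a_i=0$, and trivially equals $0$ when $b_i=0$ (every term is $0$ in that coordinate).

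With this coordinatewise description in hand, the first equivalence is immediate: $\sum_{c\in\gf(q)^*}(c\ba\cap\bb)=\bb$ holds if and only if there is no coordinate $i$ with $b_i\neq 0$ and $a_i=0$, i.e.\ if and only if $\support(\bb)\subseteq\support(\ba)$, which is exactly $\bb\preceq\ba$. For the second equivalence, I would note that $\sum_{c\in\gf(q)^*}(c\ba\cap\bb)$ is a vector each of whose coordinates is either $0$ or equal to $b_i$ (in particular its support is contained in $\support(\bb)$), so its Hamming weight is at most $\wt(\bb)$, with equality precisely when its support equals $\support(\bb)$. But by the coordinatewise description its support is $\support(\ba)\cap\support(\bb)$, so $\wt\!\bigl(\sum_{c\in\gf(q)^*}(c\ba\cap\bb)\bigr)=\wt(\bb)$ if and only if $\support(\ba)\cap\support(\bb)=\support(\bb)$, again equivalent to $\bb\preceq\ba$. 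To finish I would also check that $\wt\!\bigl(\sum_{c}(c\ba\cap\bb)\bigr)=\sum_{c}\wt(c\ba\cap\bb)$ fails in general, so one cannot simply replace the left side of \eqref{eqn-iff2}; rather, the statement as written in \eqref{eqn-iff2} is the one to prove, and the cleanest route is to argue $\bb\preceq\ba\Rightarrow\eqref{eqn-iff1}\Rightarrow\eqref{eqn-iff2}\Rightarrow\bb\preceq\ba$, closing the cycle. Actually \eqref{eqn-iff1}$\Rightarrow$\eqref{eqn-iff2} needs care, since weight is not additive; instead I would prove \eqref{eqn-iff2}$\Leftrightarrow\bb\preceq\ba$ and \eqref{eqn-iff1}$\Leftrightarrow\bb\preceq\ba$ independently, each from the coordinatewise analysis.

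The only subtle point—and the one I would state most carefully—is the counting of $c\in\gf(q)^*$ with $ca_i=b_i$: uniqueness when $a_i\neq0$, and the distinction between "the term $c\ba\cap\bb$ is nonzero in coordinate $i$ for exactly one $c$" versus "it could be nonzero for several $c$." Since $\gf(q)^*$ is a group, $c\mapsto ca_i$ is a bijection of $\gf(q)^*$ onto itself when $a_i\neq 0$, so exactly one value of $c$ gives $ca_i=b_i$; this is where the field structure is essential and is the main (though modest) obstacle. Everything else is bookkeeping: separating the coordinates into the three cases $b_i=0$, $b_i\neq0$ with $i\in\support(\ba)$, and $b_i\neq0$ with $i\notin\support(\ba)$, and reading off both the vector identity and the weight identity from that trichotomy.
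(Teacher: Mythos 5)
Your proof is correct and takes essentially the same route as the paper's: the key point in both is that for each coordinate $i$ with $a_i\neq 0$ and $b_i\neq 0$ there is exactly one $c\in\gf(q)^*$ with $ca_i=b_i$ (the group structure of $\gf(q)^*$), so the supports of the vectors $c\ba\cap\bb$ are pairwise disjoint. The only organizational difference is that the paper deduces the equivalence of (\ref{eqn-iff1}) and (\ref{eqn-iff2}) directly from this disjointness, which makes the weight of the sum equal to the sum of the weights in this particular situation (even though, as you rightly caution, that identity fails for arbitrary vectors), whereas you prove each of the two conditions equivalent to $\bb\preceq\ba$ separately from the coordinatewise trichotomy.
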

\begin{proof}
Let  $\ba=(a_1,a_2,\ldots,a_n)$ and $\bb=(b_1,b_2,\ldots,b_n)$ be any two vectors
in $\gf(q)^{n}$. We first prove that $\bb\preceq \ba$ if and only if (\ref{eqn-iff1}) holds.
Assume that $\bb\preceq \ba$, then for any  $1\leq i \leq n$,  $b_i\neq 0$ implies $a_i\neq 0$ and
there is one and only one $c\in \gf(q)^*$ such that $b_i=ca_i$. This together with the definition of
$\ba \cap \bb$  leads to (\ref{eqn-iff1}). On the other hand, suppose that  (\ref{eqn-iff1}) holds, then for any $1\leq i \leq n$, these exists some $c\in \gf(q)^*$ such that $b_i=ca_i$. Therefore $b_i\neq 0$ implies $a_i\neq 0$ and further implies $\bb\preceq \ba$. The conclusion that (\ref{eqn-iff1}) holds if and only if (\ref{eqn-iff2})
holds follows directly from the fact
that
$$
\support(c_1\ba\cap \bb)\cap\support(c_2\ba\cap \bb)=\emptyset
$$
for any distinct pair $(c_1,c_2)\in \gf(q)^*\times \gf(q)^*$.
\end{proof}

\begin{lemma}\label{lemma-key}
For any $\ba,\bb\in \gf(q)^{n}$,
\begin{eqnarray}\label{eqn-key}
(q-1)(\wt(\ba)+\wt(\bb))=\sum_{c\in \gf(q)^*}\wt(\ba+c\bb)+q\sum_{c\in \gf(q)^*}\wt\left(c\ba \cap \bb\right).
\end{eqnarray}
\end{lemma}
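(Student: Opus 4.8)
The plan is to prove \eqref{eqn-key} by reducing it to a purely one-coordinate identity in $\gf(q)$. Write $[x\neq 0]$ for the indicator that is $1$ when $x\in\gf(q)^*$ and $0$ when $x=0$, so that $\wt(\bbv)=\sum_{i=1}^{n}[v_i\neq 0]$ for $\bbv=(v_1,\dots,v_n)\in\gf(q)^n$. Both maps $\bb\mapsto\ba+c\bb$ and $\bb\mapsto c\ba\cap\bb$ act coordinatewise, and by the definition of $\cap$ the $i$-th entry of $c\ba\cap\bb$ equals $b_i$ exactly when $ca_i=b_i\in\gf(q)^*$ and is $0$ otherwise; hence $\wt(c\ba\cap\bb)=\sum_{i=1}^{n}[ca_i=b_i\neq 0]$. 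Summing over $i$, it therefore suffices to establish, for each fixed pair $(a,b)\in\gf(q)\times\gf(q)$, the scalar identity
\[
(q-1)\bigl([a\neq 0]+[b\neq 0]\bigr)=\sum_{c\in\gf(q)^*}[a+cb\neq 0]+q\sum_{c\in\gf(q)^*}[ca=b\neq 0],
\]
after which adding the $n$ copies obtained by plugging in $(a_i,b_i)$ recovers \eqref{eqn-key} verbatim.

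To prove the scalar identity I would distinguish cases according to which of $a,b$ vanish. If $a=b=0$, both sides are $0$. If exactly one of $a,b$ is nonzero, the left-hand side is $q-1$; on the right, the condition $ca=b\neq 0$ is never met (it forces both $a$ and $b$ nonzero), while $a+cb$ runs over $q-1$ nonzero values as $c$ ranges over $\gf(q)^*$ (it equals the fixed nonzero $a$ if $b=0$, and equals $cb\neq 0$ if $a=0$), so the right-hand side is also $q-1$. The remaining case $a\neq 0$, $b\neq 0$ is the only one needing a line of computation: the left-hand side is $2(q-1)$; the equation $a+cb=0$ has the unique solution $c=-ab^{-1}\in\gf(q)^*$, so $\sum_{c\in\gf(q)^*}[a+cb\neq 0]=(q-1)-1=q-2$; and $ca=b$ has the unique solution $c=ba^{-1}\in\gf(q)^*$, so $q\sum_{c\in\gf(q)^*}[ca=b\neq 0]=q$. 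Adding gives $(q-2)+q=2(q-1)$, which matches the left-hand side.

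Once all cases are verified, summing the scalar identity over the coordinates $i=1,\dots,n$ yields \eqref{eqn-key}, completing the argument. I do not expect a genuine obstacle here: the whole statement decouples into the elementary scalar identity, and the only point demanding care is correctly reading $\wt(c\ba\cap\bb)$ off the definition of the operation $\cap$, namely that a coordinate of $c\ba\cap\bb$ is nonzero precisely when $ca_i=b_i\neq 0$, so that its weight counts exactly those indices.
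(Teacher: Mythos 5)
Your proof is correct, but it is organized differently from the paper's. The paper first establishes, for each fixed $c\in\gf(q)^*$, the identity
$\wt(\ba)+\wt(\bb)=\wt(\ba+c\bb)+\sum_{y\in \gf(q)^*}\wt\left(y\ba \cap \bb\right)+\wt\left(-\tfrac{1}{c} \ba\cap \bb\right)$
(stated as a ``direct verification''), and then sums this over $c\in\gf(q)^*$, collecting the correction terms to produce the coefficient $q$. You instead sum over $c$ from the outset and decouple the entire statement into a single scalar identity per coordinate, which you verify by an exhaustive case analysis on which of $a_i,b_i$ vanish. Both arguments are elementary and ultimately coordinatewise; yours has the advantage of being fully explicit where the paper leaves its per-$c$ identity unproved, and your counting of the unique $c$ with $a_i+cb_i=0$ and the unique $c$ with $ca_i=b_i$ makes transparent where the coefficients $q-1$ and $q$ come from. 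The paper's per-$c$ identity is slightly more structural, recording how the overlap of the supports of $\ba$ and $\bb$ splits into the part cancelled in $\ba+c\bb$ and the rest. Either route is a complete proof; your reading of $\wt(c\ba\cap\bb)$ as counting the indices with $ca_i=b_i\neq 0$ matches the paper's definition of $\cap$, and all four cases of your scalar identity check out.
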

\begin{proof}
For any $c\in \gf(q)^*$, with a direct verification, we have
\begin{eqnarray*}\label{eqn-1}
\wt(\ba)+\wt(\bb)=\wt(\ba+c\bb)+\sum_{y\in \gf(q)^*}\wt\left(y\ba \cap \bb\right)+\wt\left(-\frac{1}{c} \ba\cap \bb\right).
\end{eqnarray*}
It then follows that
\begin{eqnarray*}\label{eqn-2}
(q-1)(\wt(\ba)+\wt(\bb))
&=&\sum_{c\in \gf(q)^*}\wt(\ba+c\bb)+(q-1)\sum_{y\in \gf(q)^*}\wt\left(y\ba \cap \bb\right)\\
& &+\sum_{c\in \gf(q)^*}\wt\left(c \ba\cap \bb\right) \nonumber\\
&=&\sum_{c\in \gf(q)^*}\wt(\ba+c\bb)+q\sum_{c\in \gf(q)^*}\wt\left(c\ba \cap \bb\right).
\end{eqnarray*}
This comoletes the proof of this lemma.
\end{proof}

\begin{lemma}\label{lem-sn-condition}
For any $\ba,\bb\in \gf(q)^{n}$, $\bb\preceq \ba$ if and only if
$$
\sum_{c\in \Bbb \gf(q)^*}\wt(\ba+c\bb)=(q-1)\wt(\ba)-\wt(\bb).
$$
\end{lemma}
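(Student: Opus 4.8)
The plan is to obtain the stated equivalence by combining the two preceding lemmas: the unconditional weight identity of Lemma~\ref{lemma-key} with the support criterion of Lemma~\ref{lem-iff}. First I would rearrange Lemma~\ref{lemma-key} to isolate the quantity of interest, getting
$$
\sum_{c\in \gf(q)^*}\wt(\ba+c\bb)=(q-1)\wt(\ba)+(q-1)\wt(\bb)-q\sum_{c\in \gf(q)^*}\wt\left(c\ba \cap \bb\right)
$$
for \emph{all} $\ba,\bb\in\gf(q)^n$. The key observation is that the only term in this identity that "knows" whether $\bb\preceq\ba$ is $\sum_{c\in\gf(q)^*}\wt(c\ba\cap\bb)$, and by Lemma~\ref{lem-iff} the condition $\bb\preceq\ba$ is \emph{equivalent} to $\sum_{c\in\gf(q)^*}\wt(c\ba\cap\bb)=\wt(\bb)$.

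For the forward implication, assume $\bb\preceq\ba$. Then Lemma~\ref{lem-iff} gives $\sum_{c}\wt(c\ba\cap\bb)=\wt(\bb)$; substituting this into the displayed identity collapses the $\wt(\bb)$-terms to $(q-1)\wt(\bb)-q\wt(\bb)=-\wt(\bb)$, which yields exactly $\sum_{c}\wt(\ba+c\bb)=(q-1)\wt(\ba)-\wt(\bb)$. For the converse, assume this last equality holds and plug it into Lemma~\ref{lemma-key} solved for $q\sum_{c}\wt(c\ba\cap\bb)$; a one-line cancellation gives $q\sum_{c}\wt(c\ba\cap\bb)=(q-1)(\wt(\ba)+\wt(\bb))-\big((q-1)\wt(\ba)-\wt(\bb)\big)=q\wt(\bb)$, hence $\sum_{c}\wt(c\ba\cap\bb)=\wt(\bb)$, and Lemma~\ref{lem-iff} then returns $\bb\preceq\ba$.

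I do not expect a genuine obstacle: the proof is a direct algebraic manipulation once Lemmas~\ref{lemma-key} and~\ref{lem-iff} are available, and both implications are read off the \emph{same} unconditional identity. The only subtlety worth a remark is that the weights are nonnegative integers, so that the equality $\sum_{c}\wt(c\ba\cap\bb)=\wt(\bb)$ is a genuine constraint (one always has $\sum_{c}\wt(c\ba\cap\bb)\le\wt(\bb)$ by the disjoint-support fact used in the proof of Lemma~\ref{lem-iff}); this is implicitly what makes the characterization work, but it is not needed explicitly in the present argument.
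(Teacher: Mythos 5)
Your proposal is correct and follows exactly the paper's own route: the paper's proof is the one-line statement that the conclusion follows from Lemmas~\ref{lem-iff} and~\ref{lemma-key} by plugging Equation~(\ref{eqn-iff2}) into Equation~(\ref{eqn-key}), and your argument is precisely that substitution, carried out explicitly in both directions.
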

\begin{proof}
The conclusion follows from Lemmas \ref{lem-iff} and \ref{lemma-key} by plugging Equation (\ref{eqn-iff2}) into Equation (\ref{eqn-key}).
\end{proof}

We are now in a position to present the sufficient and necessary condition for  linear codes over $\gf(q)$ to be minimal.

\begin{theorem}\label{thm-main1}
Let $\mathcal{C}\subseteq \gf(q)^{n}$ be a linear code. Then it is minimal if and only if
$$
\sum_{c\in \Bbb \gf(q)^*}\wt(\ba+c\bb)\neq(q-1)\wt(\ba)-\wt(\bb)
$$
for any $\gf(q)$-linearly independent codewords $\ba,\bb\in \mathcal{C}$.
\end{theorem}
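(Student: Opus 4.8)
The plan is to reduce the statement directly to Lemma~\ref{lem-sn-condition}, which already characterizes the covering relation $\bb\preceq\ba$ by the weight identity $\sum_{c\in\gf(q)^*}\wt(\ba+c\bb)=(q-1)\wt(\ba)-\wt(\bb)$. The only extra ingredients are two elementary observations: first, for a \emph{nonzero} codeword $\ba$, minimality of $\ba$ is equivalent to saying that $\ba$ covers no codeword outside the line $\gf(q)\ba$; second, a pair $\{\ba,\bb\}$ is $\gf(q)$-linearly dependent exactly when $\ba=\bzero$, or $\bb=\bzero$, or $\bb=a\ba$ for some $a\in\gf(q)$ (equivalently $\ba=a'\bb$), so that "linearly independent" is precisely the negation of "one is a scalar multiple of the other, with both nonzero."

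I would prove each implication in contrapositive form. For one direction, assume $\mathcal{C}$ is not minimal, so some nonzero codeword $\ba$ covers a codeword $\bb$ with $\bb\notin\gf(q)\ba$; then $\bb\neq\bzero$ and $\{\ba,\bb\}$ is $\gf(q)$-linearly independent. Since $\bb\preceq\ba$, Lemma~\ref{lem-sn-condition} applied to this pair yields $\sum_{c\in\gf(q)^*}\wt(\ba+c\bb)=(q-1)\wt(\ba)-\wt(\bb)$, so the inequality in the theorem fails for the independent pair $\ba,\bb$. For the other direction, assume the displayed equality holds for some $\gf(q)$-linearly independent pair $\ba,\bb\in\mathcal{C}$; linear independence forces $\ba\neq\bzero$ and $\bb\notin\gf(q)\ba$, and Lemma~\ref{lem-sn-condition} turns the equality into $\bb\preceq\ba$. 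Hence $\ba$ covers a codeword not on its own line, so $\ba$ is not minimal and $\mathcal{C}$ is not minimal. Together these two contrapositives give the claimed equivalence.

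I do not expect a genuine obstacle: essentially all the content is already carried by Lemmas~\ref{lem-iff}, \ref{lemma-key} and \ref{lem-sn-condition}, and what remains is bookkeeping around the definition of a minimal codeword and the degenerate cases ($\ba$ or $\bb$ equal to $\bzero$, or $\bb$ a scalar multiple of $\ba$). The one point deserving a sentence of care is that nothing is lost by restricting the quantifier in the theorem to linearly independent pairs: for a dependent pair with $\ba\neq\bzero$ we have $\bb=a\ba$, for which the covering relation $\bb\preceq\ba$ is automatic and is exactly what the definition of minimality permits, so such pairs can never witness non-minimality and may be excluded harmlessly.
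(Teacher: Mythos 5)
Your proposal is correct and follows the same route as the paper: the paper's own proof is a one-line reduction to Lemma~\ref{lem-sn-condition} plus the definition of minimality, and your write-up simply supplies the bookkeeping (the contrapositives and the degenerate/dependent-pair cases) that the paper leaves implicit. No gaps.
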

\begin{proof}
The conclusion follows from the definition of minimal codes and  Lemma \ref{lem-sn-condition}.
\end{proof}

\begin{rem}
When $q=2$, according to Theorem \ref{thm-main1}, $\C$ is minimal if and only if
$$
\wt(\ba+\bb)\neq\wt(\ba)-\wt(\bb)
$$
for any two distinct codewords $\ba,\bb\in \mathcal{C}$.
This sufficient and necessary condition was derived in \cite{DHZ}. Therefore, the result in
Theorem \ref{thm-main1} generalizes the one in \cite{DHZ} since it works for
any prime power $q$.
\end{rem}

By Theorem \ref{thm-main1}, the minimality of $\mathcal{C}$ is completely determined by the weights of its codewords. In particular, if $\mathcal{C}$ is a $q$-ary linear code with only two weights, by Theorem \ref{thm-main1}, we can judge the minimality of it as follows.

\begin{corollary}\label{cor-twoweight}
Let $\mathcal{C}\subseteq \gf(q)^{n}$ be a two-weight $q$-ary linear code with nonzero weights $w_1$ and $w_2$, where $0<w_1<w_2< n$. Then  $\mathcal{C}$ is minimal, provided that
\begin{eqnarray*}
jw_1\neq (j-1)w_2
\end{eqnarray*}
for any integer $j$ with $2\leq j\leq q$.
\end{corollary}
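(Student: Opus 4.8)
The plan is to apply Theorem~\ref{thm-main1} directly, using the hypothesis that $\mathcal C$ has only two nonzero weights $w_1<w_2$. Let $\ba,\bb\in\mathcal C$ be any two $\gf(q)$-linearly independent codewords. I must show that
$$
\sum_{c\in\gf(q)^*}\wt(\ba+c\bb)\neq(q-1)\wt(\ba)-\wt(\bb).
$$
First I would observe that the left-hand side is a sum of $q-1$ terms, each of which is a nonzero weight of $\mathcal C$: indeed $\ba+c\bb\neq\bzero$ for every $c\in\gf(q)^*$ because $\ba$ and $\bb$ are linearly independent, so each $\wt(\ba+c\bb)\in\{w_1,w_2\}$. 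Hence the left-hand side equals $jw_1+(q-1-j)w_2$ for some integer $j$ with $0\le j\le q-1$, where $j$ counts how many of the $q-1$ sums have weight $w_1$.

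Next I would bound the right-hand side. Since $\ba$ is also a nonzero codeword, $\wt(\ba)\in\{w_1,w_2\}$, and likewise $\wt(\bb)\in\{w_1,w_2\}$; moreover $0<\wt(\bb)$ and $0<w_1<w_2<n$. I would then compare the two sides. Rewriting, the forbidden equality is
$$
jw_1+(q-1-j)w_2=(q-1)\wt(\ba)-\wt(\bb),
$$
i.e.
$$
(q-1)w_2-j(w_2-w_1)=(q-1)\wt(\ba)-\wt(\bb).
$$
Using $\wt(\ba)\le w_2$ on the right gives $(q-1)\wt(\ba)-\wt(\bb)\le(q-1)w_2-\wt(\bb)<(q-1)w_2$, so necessarily $j\ge 1$. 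On the other hand, using $\wt(\ba)\ge w_1$ and $\wt(\bb)\le w_2$ gives $(q-1)\wt(\ba)-\wt(\bb)\ge(q-1)w_1-w_2$, which forces $j\le q-1$ trivially but, combined with the exact identity, pins down $(q-1-j')(w_2-w_1)=\wt(\bb)-((q-1)(\wt(\ba)-w_1))$ type relations. The cleanest route is: since $\wt(\ba)\in\{w_1,w_2\}$, split into the two cases. If $\wt(\ba)=w_1$, the equation becomes $(w_2-w_1)(q-1-j)=w_2-\wt(\bb)$; since $\wt(\bb)\in\{w_1,w_2\}$ the right side is either $0$ or $w_2-w_1$, giving $j=q-1$ or $j=q-2$, and after substituting back one checks linear independence is violated or one lands exactly on the excluded relation $j'w_1=(j'-1)w_2$ with $2\le j'\le q$. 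If $\wt(\ba)=w_2$, the equation becomes $(w_2-w_1)j=w_2-\wt(\bb)+0$ similarly, again reducing to $j'w_1=(j'-1)w_2$ for some $j'$ in the stated range. In every case the equality in Theorem~\ref{thm-main1} can hold only if $jw_1=(j-1)w_2$ for some integer $j$ with $2\le j\le q$, which is exactly what the hypothesis of the corollary excludes; hence the equality never holds and $\mathcal C$ is minimal.

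The main obstacle is the bookkeeping in translating the single scalar equation $jw_1+(q-1-j)w_2=(q-1)\wt(\ba)-\wt(\bb)$, with $\wt(\ba),\wt(\bb)\in\{w_1,w_2\}$ and $0\le j\le q-1$, into the normalized form $j'w_1=(j'-1)w_2$ with $2\le j'\le q$; one has to be careful that the boundary cases $j=0$ and $j=q-1$ (which would correspond to all $q-1$ sums having the same weight) are handled, and that the inequalities $0<w_1<w_2<n$ are used to discard degenerate solutions. I expect no deep difficulty here, only a short case analysis on the four combinations of $(\wt(\ba),\wt(\bb))$, each of which collapses the identity to the claimed obstruction after elementary algebra.
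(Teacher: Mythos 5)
Your strategy coincides with the paper's: pass to the contrapositive of Theorem~\ref{thm-main1}, use linear independence to see that each $\ba+c\bb$ is a nonzero codeword so that all $q$ quantities $\wt(\bb)$ and $\wt(\ba+c\bb)$, $c\in\gf(q)^*$, lie in $\{w_1,w_2\}$, and count how many equal $w_1$. However, both of your displayed case identities contain errors. With $\wt(\ba)=w_1$ the equation $jw_1+(q-1-j)w_2=(q-1)w_1-\wt(\bb)$ rearranges to $(q-1-j)(w_2-w_1)=-\wt(\bb)$, not $w_2-\wt(\bb)$; since the left side is $\ge 0$ and the right side is $<0$, this case is simply impossible, rather than leading to $j=q-1$ or $j=q-2$ as you assert. (The paper disposes of it in one line: the sum $\wt(\bb)+\sum_c\wt(\ba+c\bb)$ is at least $qw_1>(q-1)w_1$, so necessarily $\wt(\ba)=w_2$.) With $\wt(\ba)=w_2$ the correct identity is $j(w_2-w_1)=\wt(\bb)$, not $w_2-\wt(\bb)$; it yields $(j+1)w_1=jw_2$ when $\wt(\bb)=w_1$ and $jw_1=(j-1)w_2$ when $\wt(\bb)=w_2$, and ruling out $j'\in\{0,1\}$ via $w_1>0$ lands exactly on the excluded relation with $2\le j'\le q$. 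The paper sidesteps the four-way split by placing $\wt(\bb)$ in the same multiset as the $\wt(\ba+c\bb)$ and letting $j$ be the multiplicity of $w_1$ among all $q$ entries, whence the equation collapses at once to $jw_1=(j-1)w_2$ with $2\le j\le q$. So your argument is sound in outline and reaches the right conclusion, but the algebra in the case analysis must be repaired before it is a proof.
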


\begin{proof} 
Suppose that $\mathcal{C}$ is not minimal. 
By Theorem \ref{thm-main1}, there exists a pair of $\gf(q)$-linearly independent codewords $\ba,\bb\in \mathcal{C}$ such
that
\begin{eqnarray}\label{eqn-6}
\wt(\textbf{b}) + \sum_{c\in \Bbb \gf(q)^*}\wt(\textbf{a}+c\textbf{b}) 
=(q-1)\wt(\textbf{a}).
\end{eqnarray} 
Note that $\wt(\textbf{a}+c\textbf{b})>0$ for any $c\in \Bbb \gf(q)^*$, as $\wt(\ba)>0$ and $\wt(\bb)>0$.
Thus we have $\wt(\ba)=w_2$. Consider the multiset
$$
\{\wt(\textbf{a}+c\textbf{b}):c\in\Bbb \gf(q)^* \}\cup\{\wt(\textbf{b})\}.
$$
Assume that the multiplicity of $w_1$ in this multiset is $j$ and the multiplicity of $w_2$ is $q-j$. Then we have
$2\leq j\leq q$. It follows from (\ref{eqn-6}) that
$$
jw_1=(j-1)w_2,
$$
This completes the proof.
\end{proof}

\section{A family of minimal ternary linear codes violating the Ashikhmin-Barg condition}\label{sec-construction}

In this section, we present a family of minimal ternary linear codes violating the Ashikhmin-Barg condition with a general construction. The idea is similar to the 
one in our construction for the binary case \cite{DHZ}.

\subsection{A general construction of ternary linear codes}

Throughout this section, we always assume that $f(x)$ is a function from $\gf(3)^m$ to $\gf(3)$ such that $f(\bzero)=0$ but $f(b)\neq 0$ for at least
one $b \in \gf(3)^m$. Recall that the Walsh transform of $f$  is given
as
\begin{eqnarray*}
\hat{f}(w)=\sum_{x \in \gf(3)^m} \zeta_{3}^{f(x)-w \cdot x}, ~w\in \gf(3)^m,
\end{eqnarray*}
where $\zeta_{3}$ is a primitive $3$-th  complex root of unity, and  $w\cdot x$ denotes the standard inner product of $w$ and $x$. Using such function $f$, we define a linear code by
\begin{eqnarray}\label{eqn-mycode}
\C_f=\{(uf(x)+v \cdot x)_{x \in (\gf(3)^m)^*}: u \in \gf(3), \ v \in \gf(3)^m\}.
\end{eqnarray}

The construction above is general in the sense that it works for any function $f$ from $\gf(3)^m$ to $\gf(3)$ with $f(\bzero)=0$.
The following result shows that the weight distribution of $\C_f$ could be determined by the Walsh spectrum of $f$.

\begin{theorem}\label{thm-codeparam}
Assume that $f(x)\neq w \cdot x$  for any $w\in \gf(3)^{m}$. The linear code $\C_f$ in (\ref{eqn-mycode}) has length $3^m-1$ and dimension $m+1$. In addition,
the weight distribution of $\C_f$ is given by the following multiset union:
\begin{eqnarray}\label{multiset}
&& \left\{\left\{2\left(3^{m-1}-\frac{\re\left(\hat{f}(v)\right)}{3}\right): u \in \gf(3)^*, v \in \gf(3)^m \right\}\right\} 
\cup \nonumber \\ 
&& \left\{\left\{3^{m}-3^{m-1}: u=0,v \in (\gf(3)^m)^* \right\}\right\} \cup \{\{0\}\}.
\end{eqnarray}
Herein and hereafter,  $\re(x)$ denotes the real part of the complex number $x$.
\end{theorem}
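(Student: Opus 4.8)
The plan is to compute directly the Hamming weight of a generic codeword $\bc_{u,v} = (uf(x) + v\cdot x)_{x \in (\gf(3)^m)^*}$ and then sort the resulting values into the claimed multiset. First I would note that the length is clear: the coordinates are indexed by $(\gf(3)^m)^*$, which has $3^m - 1$ elements. For the dimension, the hypothesis $f(x) \neq w\cdot x$ for all $w \in \gf(3)^m$ guarantees that the map $(u,v) \mapsto \bc_{u,v}$ is injective: if $uf(x) + v\cdot x = 0$ for all nonzero $x$, then since $f(\bzero) = 0$ it also holds at $x = \bzero$, so $uf(x) = -v\cdot x$ identically on $\gf(3)^m$; if $u \neq 0$ this forces $f(x) = (-u^{-1}v)\cdot x$, contradicting the hypothesis, hence $u = 0$ and then $v\cdot x \equiv 0$ forces $v = \bzero$. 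Thus $|\C_f| = 3^{m+1}$ and $\dim \C_f = m+1$.

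Next I would handle the two cases for $u$. When $u = 0$, the codeword is $(v\cdot x)_{x \in (\gf(3)^m)^*}$; for $v = \bzero$ this is the zero codeword, contributing the $\{\{0\}\}$ term, and for $v \neq \bzero$ the linear form $x \mapsto v\cdot x$ vanishes on a hyperplane of size $3^{m-1}$, one point of which is $\bzero$, so the weight is $(3^m - 1) - (3^{m-1} - 1) = 3^m - 3^{m-1}$, giving the second bracketed term. The main work is the case $u \in \gf(3)^*$. Here I would use the standard character-sum count for the number of zeros of $uf(x) + v\cdot x$ over all of $\gf(3)^m$:
\begin{eqnarray*}
N_0 := \#\{x \in \gf(3)^m : uf(x) + v\cdot x = 0\} = \frac{1}{3}\sum_{x \in \gf(3)^m}\sum_{y \in \gf(3)} \zeta_3^{y(uf(x) + v\cdot x)}.
\end{eqnarray*}
Separating the $y = 0$ term gives $N_0 = 3^{m-1} + \tfrac{1}{3}\sum_{y \in \gf(3)^*}\sum_x \zeta_3^{yuf(x) + yv\cdot x}$. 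Now $\{yu : y \in \gf(3)^*\} = \gf(3)^*$, and the substitution reveals $\sum_x \zeta_3^{f(x) - w\cdot x} = \hat f(w)$ for suitable $w$. Being slightly careful: when $u = 1$ the two terms $y = 1,2$ give $\hat f(-v) + \widehat{f}$-type conjugate terms; in general one checks that $\sum_{y \in \gf(3)^*}\sum_x \zeta_3^{yuf(x)+yv\cdot x}$ equals $\hat f(w) + \overline{\hat f(w)} = 2\re(\hat f(w))$ for an appropriate $w$ depending on $u,v$ (since $\zeta_3^2 = \overline{\zeta_3}$ and $f, v\cdot x$ are $\gf(3)$-valued), and as $v$ ranges over $\gf(3)^m$ so does $w$. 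Hence $N_0 = 3^{m-1} + \tfrac{2}{3}\re(\hat f(w))$. Since $x = \bzero$ is one of these zeros (as $f(\bzero) = 0$), the number of zero coordinates among the nonzero $x$ is $N_0 - 1$, so the weight is
\begin{eqnarray*}
\wt(\bc_{u,v}) = (3^m - 1) - (N_0 - 1) = 3^m - 3^{m-1} - \frac{2\re(\hat f(w))}{3} = 2\left(3^{m-1} - \frac{\re(\hat f(w))}{3}\right).
\end{eqnarray*}
Letting $(u,v)$ range over $\gf(3)^* \times \gf(3)^m$ and repackaging the index $w$ back as $v$ yields the first bracketed multiset.

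The step I expect to require the most care is the bookkeeping in the $u \in \gf(3)^*$ case: making precise the claim that $\sum_{y \in \gf(3)^*}\sum_x \zeta_3^{yuf(x)+yv\cdot x} = 2\re(\hat f(w))$ with $w$ a bijective reindexing of $v$ (for each fixed $u$). The point is that since $uf(x) + v\cdot x \in \gf(3)$, the $y=2$ term is the complex conjugate of the $y=1$ term, and then one must identify $\sum_x \zeta_3^{uf(x) + v\cdot x}$ with $\hat f(w)$ for the right $w$ — here the sign conventions in the definition $\hat f(w) = \sum_x \zeta_3^{f(x) - w\cdot x}$ matter, and one uses that $u \in \{1,2\}$ is invertible so $uf(x) + v\cdot x = u(f(x) - (-u^{-1}v)\cdot x)$; absorbing the outer $u$ via the conjugation symmetry again. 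Once this identification is pinned down, everything else is the routine hyperplane count and assembling the three pieces, and the total multiplicity check ($2 \cdot 3^m$ codewords with $u \neq 0$, plus $3^m - 1$ with $u = 0, v \neq \bzero$, plus one zero codeword, summing to $3^{m+1}$) confirms no codewords are lost.
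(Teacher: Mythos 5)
Your proposal is correct and follows essentially the same route as the paper: the weight of $(uf(x)+v\cdot x)_{x\neq\bzero}$ is computed by the standard character-sum count over $\gf(3)$, the $y=2$ term is paired with the $y=1$ term as complex conjugates to produce $2\re(\hat f(w))$, and the index $w=-u^{-1}v$ is reindexed bijectively back to $v$. Your dimension argument (direct injectivity of $(u,v)\mapsto\bc_{u,v}$) is a cosmetic variant of the paper's equivalent criterion $\re(\hat f(w))\neq 3^m$; both reduce to the hypothesis $f(x)\neq w\cdot x$.
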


\begin{proof} 
In terms of exponential sums, the Hamming weight $\wt(\bc)$ of any codeword $\bc=(uf(x)+v \cdot x)_{x \in \gf(3)^m \setminus \{\bzero\}}$ of $\C_f$ in (\ref{eqn-mycode}) can be calculated as
\begin{eqnarray*}
\wt(\bc)&=&\sharp \{x\in \gf(3)^m \setminus \{\bzero\}:uf(x)+v \cdot x\neq 0\}\\
&=&(3^{m}-1)-\frac{1}{3}\sum_{y\in \gf(3)}\sum_{x\in \gf(3)^m \setminus \{\bzero\}}\zeta_{3}^{y(uf(x)+v \cdot x)}\\
&=&(3^{m}-1)+1-3^{m-1}-\frac{1}{3}\sum_{y\in \gf(3)^{*}}\sum_{x\in \gf(3)^m}\zeta_{3}^{y(uf(x)+v \cdot x)}\\
&=&3^{m}-3^{m-1}-\frac{1}{3}\left(\sum_{x\in \gf(3)^m}\zeta_{3}^{uf(x)+v \cdot x}+\sum_{x\in \gf(3)^m}\zeta_{3}^{-uf(x)-v \cdot x}\right)\\
&=&3^{m}-3^{m-1}-\frac{2}{3}\re \left(\sum_{x\in \gf(3)^m}\zeta_{3}^{uf(x)+v \cdot x}\right).
\end{eqnarray*}
We discuss the value of $\wt(\bc)$ by considering the following cases.
\begin{enumerate}
\item[$\bullet$] If $u=0$ and $v=\textbf{0}$, then $\wt(\bc)=0$.
\item[$\bullet$] If $u=0$ and $v\neq\textbf{0}$, then $\wt(\bc)=3^{m}-3^{m-1}$.
\item[$\bullet$] If $u=1$ and $v\neq\textbf{0}$, then $\wt(\bc)=3^{m}-3^{m-1}-\frac{2}{3}\re(\hat{f}(-v))$.
\item[$\bullet$] If $u=-1$ and $v\neq\textbf{0}$, then $\wt(\bc)=3^{m}-3^{m-1}-\frac{2}{3}\re(\hat{f}(v))$.
\item[$\bullet$] If $u\in \gf(3)^{*}$ and $v=\textbf{0}$, then $\wt(\bc)=3^{m}-3^{m-1}-\frac{2}{3}\re(\hat{f}(\textbf{0}))$.
\end{enumerate}
The weight distribution in (\ref{multiset}) then follows from the discussions above. Note that the dimension of the code $\C_f$ is $m+1$ if and only if
$$\re\left(\hat{f}(w)\right)\neq 3^{m}\mbox{ for any }w \in \gf(3)^m,$$
where
\begin{eqnarray*}
\re\left(\hat{f}(w)\right)&=&\re\left(\sum_{x \in \gf(3)^m} \zeta_{3}^{f(x)-w \cdot x}\right) 
= \sum_{x \in \gf(3)^m} \re\left(\zeta_{3}^{f(x)-w \cdot x}\right).
\end{eqnarray*}
Hence $\re\left(\hat{f}(w)\right)= 3^{m}$ if and only if $f(x)=w\cdot x$ for all $x\in \gf(3^m)$.
This together with the hypothesis that $f(x)\neq w \cdot x$  for any $w\in \gf(3)^{m}$ means that the
dimension of  $\C_f$ is $m+1$.
\end{proof}

\subsection{When are these codes minimal?}

Now, a natural question is when the linear code $\C_f$ defined in (\ref{eqn-mycode}) is minimal.
The following gives a sufficient and necessary condition for $\C_f$ to be minimal in terms of the Walsh spectrum of $f$.

\begin{theorem}\label{thm-1stconminimal}
Let $\C_f$ be the ternary code of Theorem \ref{thm-codeparam}. Assume that $f(x)\neq v \cdot x$ for any $v\in \gf(3)^{m}$. Then $\C_f$ is a minimal $[3^{m}-1,m+1]$ code if and only if
\begin{eqnarray*}
\re(\hat{f}(w_1))+\re(\hat{f}(w_2))-2\re(\hat{f}(w_3))\neq3^{m}
\end{eqnarray*}
and
\begin{eqnarray*}
\re(\hat{f}(w_1))+\re(\hat{f}(w_2))+\re(\hat{f}(w_3))\neq3^{m}
\end{eqnarray*}
for any pairwise distinct vectors $w_1,w_2$ and $w_3$ in $\gf(3)^m$ satisfying $w_1+w_2+w_3=\bzero$.
\end{theorem}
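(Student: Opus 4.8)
The plan is to apply the general minimality criterion of Theorem \ref{thm-main1} directly to the code $\C_f$, using the explicit weight formulas obtained in the proof of Theorem \ref{thm-codeparam}. Recall that a codeword of $\C_f$ is indexed by a pair $(u,v)\in\gf(3)\times\gf(3)^m$, and that for $u=1$ the weight is $3^m-3^{m-1}-\tfrac23\re(\hat f(-v))$, for $u=-1$ it is $3^m-3^{m-1}-\tfrac23\re(\hat f(v))$, for $u=0$ and $v\neq\bzero$ it is $3^m-3^{m-1}$, and for $(u,v)=(0,\bzero)$ it is $0$. So the first step is to translate the inequality
$$
\sum_{c\in\gf(3)^*}\wt(\ba+c\bb)\neq 2\wt(\ba)-\wt(\bb)
$$
from Theorem \ref{thm-main1} (with $q=3$, so $q-1=2$) into a statement about Walsh values. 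Denote by $\ba=(u_1,v_1)$ and $\bb=(u_2,v_2)$ two $\gf(3)$-linearly independent codewords; then $\ba+c\bb=(u_1+cu_2,\,v_1+cv_2)$ for $c\in\{1,-1\}$, so the inequality involves the weights of the four codewords $(u_1,v_1)$, $(u_2,v_2)$, $(u_1+u_2,v_1+v_2)$, $(u_1-u_2,v_1-v_2)$.

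The second step is a case analysis on the pair $(u_1,u_2)$ modulo the $\gf(3)^*$-scaling (we may assume $(u_1,u_2)\neq(0,0)$ does not force linear dependence; when $u_1=u_2=0$ linear independence of $\ba,\bb$ just means $v_1,v_2$ independent, and all four weights equal $3^m-3^{m-1}$, so the inequality reads $2(3^m-3^{m-1})+ (3^m-3^{m-1}) \ne 2(3^m-3^{m-1})-(3^m-3^{m-1})$, automatically true — this case never obstructs minimality). After scaling $\ba$ we may take $u_1\in\{0,1\}$. If $u_1=1$, scaling is fixed; $u_2$ ranges over $\{0,1,-1\}$, but $u_2=1$ together with $v_1=v_2$ would give $\ba=\bb$, so we keep the genericity. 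In each subcase, substituting the four weight expressions, the constant $3^m-3^{m-1}$ terms must be counted carefully: the left side has $2$ summands, the right side is $2\wt(\ba)-\wt(\bb)$, so the net coefficient of $(3^m-3^{m-1})$ is $2+1-2(1)+(1)\cdot(\text{sign from }-\wt(\bb))$... I will organize this as: move everything to one side and collect. The upshot is that after cancellation one is left with a linear relation among at most three values $\re(\hat f(w))$ evaluated at three vectors $w_1,w_2,w_3$ that are (up to sign, which is absorbed since $\re(\hat f(w))=\re(\hat f(-w))$ is \emph{not} generally true — careful: $\hat f(-w)=\overline{\hat f(w)}$ when $f$ is $\gf(3)$-valued, since $\overline{\zeta_3}=\zeta_3^{-1}$ and replacing $x$... actually one checks $\hat f(-w)=\sum\zeta_3^{f(x)+w\cdot x}$, and $\re$ of this need not equal $\re(\hat f(w))$; but note $\sum_x\zeta_3^{-f(x)-w\cdot x}=\overline{\hat f(w)}$ has the same real part as $\hat f(w)$, and the weight formulas were derived using exactly this symmetrization, so every weight genuinely depends only on $\re(\hat f(\pm v))$ in the symmetrized combination already appearing). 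The three vectors will turn out to satisfy $w_1+w_2+w_3=\bzero$, and the two displayed inequalities correspond precisely to the two essentially different sign patterns that arise: one where all three Walsh values enter with coefficient $+1$ (summing to $3^m$), and one where two enter with $+1$ and one with $-2$.

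The third step is to verify the correspondence is a bijection: every pair of linearly independent $\ba,\bb$ that could violate Theorem \ref{thm-main1} produces, via this reduction, an ordered triple of pairwise distinct vectors summing to zero together with one of the two forbidden equations; and conversely every such triple with a forbidden equation comes from an explicit choice of $\ba,\bb$ (for instance, taking $u_1,u_2$ in a suitable configuration and $v_1,v_2$ built from $w_1,w_2,w_3$). Here I must check that "pairwise distinct" on the $w_i$ side matches "$\gf(3)$-linearly independent" on the $\ba,\bb$ side, and that the case $u_1=u_2=0$ (which never obstructs) does not need to be encoded — it simply contributes no constraint, consistent with the theorem statement which only imposes conditions indexed by such triples. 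I also need that $w_3=\bzero$ is allowed or excluded consistently: if one of the $w_i$ is $\bzero$ then $\re(\hat f(\bzero))=\sum_x\re(\zeta_3^{f(x)})$, and the hypothesis $f(\bzero)=0$, $f\not\equiv 0$ keeps this strictly below $3^m$, but the pairwise-distinctness still permits $w_i=\bzero$ for exactly one index; I will confirm the bookkeeping handles this.

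\textbf{Main obstacle.} The genuine difficulty is the combinatorial bookkeeping in the second and third steps: keeping track of the $\gf(3)^*$-scaling equivalence on codewords, ensuring that the map from linearly independent pairs $(\ba,\bb)$ to triples $(w_1,w_2,w_3)$ is well-defined and surjective onto all pairwise-distinct zero-sum triples, and checking that the two weight-equation families collapse to exactly the two displayed inequalities with no third independent case and no spurious constraints. In particular one must be vigilant that the sign $-\wt(\bb)$ on the right-hand side of the criterion, combined with the three $\wt(\ba+c\bb)$ and $\wt(\ba)$ terms, produces after the $(3^m-3^{m-1})$-cancellation precisely the coefficient pattern $(1,1,-2)$ or $(1,1,1)$ on the surviving Walsh values and nothing else; an arithmetic slip there would change the theorem. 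Everything else — the weight formulas, the symmetrization, the reduction of the covering relation — is already supplied by Theorem \ref{thm-codeparam} and Theorem \ref{thm-main1}, so once the case analysis is laid out cleanly the proof is a finite verification.
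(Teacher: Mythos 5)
Your proposal is correct and follows essentially the same route as the paper: the paper's proof also writes each codeword as $u\bbf+s_v$, splits into cases according to the $u$-components of the two codewords (its Cases I--VI, which your scaling normalization merely compresses), applies Lemma \ref{lem-sn-condition} together with the weight formulas from Theorem \ref{thm-codeparam} to cancel the $3^m-3^{m-1}$ terms, and in each case verifies that the resulting relation involves three pairwise distinct vectors summing to $\bzero$ with exactly the sign patterns $(1,1,-2)$ or $(1,1,1)$. The bookkeeping you flag as the main obstacle is indeed the whole content of the paper's argument, and it works out exactly as you predict.
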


\begin{proof}
We define the following linear code
$$\mathcal{S}_{m}=\{(v\cdot x)_{x\in\gf(3)^{m}\backslash\{\bzero\}}:v\in \gf(3)^{m}\}.$$
This code is a ternary code with parameters $[3^m-1,m,3^m-3^{m-1}]$ and the only nonzero Hamming weight $3^m-3^{m-1}$.

Assume that $f(x)\neq v \cdot x$ for any $v\in \gf(3)^{m}$. Let $\bbf=(f(x))_{x\in\gf(3)^{m}\backslash\{\bzero\}}$. By definition, every codeword $a\in \C_f$ can be expressed as
$$a=u_a\bbf+s_a,$$
where $u_a\in \{0,1,-1\}$ and $s_a=(v_a\cdot x)_{x\in \gf(3)^{m} \setminus \{\bzero\}}\in \mathcal{S}_{m}$ for some $v_a\in \gf(3)^{m}$. We next consider the
coverage of codewords in $\C_f$ by distinguishing the following cases.

Case I: Let $a=s_a$ and $b=s_b$ be two linearly independent codewords in $\mathcal{S}_{m}$. Then one cannot cover
the other as the one-weight code $\mathcal{S}_{m}$ is obviously minimal.

Case II: Let $a=\bbf+s_a$ and $b=\bbf+s_b$, where $a,b$ are linearly independent. This implies that
$$a\pm b\neq \bzero,\mbox{ i.e., }s_a\neq s_b\mbox{ and }s_a+s_b\neq \bbf.$$
By assumption, $s_a+s_b \neq \bbf$ always holds. Hence $a,b$ are linearly independent if and only if $s_a\neq s_b$. Since
$$s_a=(v_a\cdot x)_{x\in\gf(3)^{m}\setminus \{\bzero\}},\ 
s_b=(v_b\cdot x)_{x\in\gf(3)^{m} \setminus \{\bzero \}},$$ we further deduce that $a,b$ are linearly independent if and only if $v_a\neq v_b$.
Suppose that $b\preceq a$. It then follows from Lemma \ref{lem-sn-condition} and the proof of Theorem \ref{thm-codeparam} that
\begin{eqnarray*}
b\preceq a &\iff & \wt(\ba+\bb)+\wt(\ba-\bb)=2\wt(\ba)-\wt(\bb)\\
&\iff & \left(3^{m}-3^{m-1}-\frac{2}{3}\re(\hat{f}(v_a+v_b))\right)+(3^{m}-3^{m-1})\\
& &=2\left(3^{m}-3^{m-1}-\frac{2}{3}\re(\hat{f}(-v_a))\right)-\left(3^{m}-3^{m-1}-\frac{2}{3}\re(\hat{f}(-v_b))\right)\\
& \iff & \re(\hat{f}(v_a+v_b))-2\re(\hat{f}(-v_a))+\re(\hat{f}(-v_b))=3^{m}.
\end{eqnarray*}
Similarly, we have
$$a\preceq b \iff \re(\hat{f}(v_a+v_b))-2\re(\hat{f}(-v_b))+\re(\hat{f}(-v_a))=3^{m}.$$
Note that $v_a+v_b,-v_a,-v_b$ are pairwise distinct as $v_a\neq v_b$. In addition,
$$v_a+v_b-v_a-v_b=\bzero.$$

Case III: Let $a=-\bbf+s_a$ and $b=-\bbf+s_b$, where $a,b$ are linearly independent. Similarly as in Case II, we deduce that $v_a\neq v_b$ and
$$b\preceq a \iff \re(\hat{f}(-v_a-v_b))-2\re(\hat{f}(v_a))+\re(\hat{f}(v_b))=3^{m}$$ and
$$a\preceq b \iff \re(\hat{f}(-v_a-v_b))-2\re(\hat{f}(v_b))+\re(\hat{f}(v_a))=3^{m}.$$
Note that $-v_a-v_b,v_a,v_b$ are pairwise distinct as $v_a\neq v_b$. In addition,
$$-v_a-v_b+v_a+v_b=\bzero.$$

Case IV: Let $a=s_a$ and $b=\bbf+s_b$ with $a$ being nonzero. Then $a,b$ are linearly independent because of the assumption. Similarly as in Case II, we deduce that
$$b\preceq a \iff \re(\hat{f}(-v_a-v_b))+\re(\hat{f}(v_a-v_b))+\re(\hat{f}(-v_b))=3^{m}$$ and
$$a\preceq b \iff \re(\hat{f}(-v_a-v_b))+\re(\hat{f}(v_a-v_b))-2\re(\hat{f}(-v_b))=3^{m}.$$
Note that $-v_a-v_b,v_a-v_b,-v_b$ are pairwise distinct as $v_a\neq \bzero$. In addition,
$$-v_a-v_b+v_a-v_b-v_b=\bzero.$$

Case V: Let $a=s_a$ and $b=-\bbf+s_b$ with $a$ being nonzero. Then $a,b$ are linearly independent because of the assumption. Similarly as in Case II, we deduce that
$$b\preceq a \iff \re(\hat{f}(v_a+v_b))+\re(\hat{f}(v_b-v_a))+\re(\hat{f}(v_b))=3^{m}$$ and
$$b\preceq a \iff \re(\hat{f}(v_a+v_b))+\re(\hat{f}(v_b-v_a))-2\re(\hat{f}(v_b))=3^{m}.$$
Note that $v_a+v_b,v_b-v_a,v_b$ are pairwise distinct as $v_a\neq \bzero$. In addition,
$$v_a+v_b+v_b-v_a+v_b=\bzero.$$

Case VI: Let $a=\bbf+s_a$ and $b=-\bbf+s_b$ with $s_a\neq -s_b$. Then $a,b$ are linearly independent because of the assumption. Similarly as in Case II, we deduce that
$$b\preceq a \iff \re(\hat{f}(v_a-v_b))+\re(\hat{f}(v_b))-2\re(\hat{f}(-v_a))=3^{m}$$ and
$$a\preceq b \iff \re(\hat{f}(v_a-v_b))+\re(\hat{f}(-v_a))-2\re(\hat{f}(v_b))=3^{m}.$$
Note that $v_a-v_b,v_b,-v_a$ are pairwise distinct as $v_a\neq -v_b$. In addition,
$$v_a-v_b+v_b-v_a=\bzero.$$

Combining the discussions above, we complete the proof.
\end{proof}

\subsection{A family of minimal ternary linear codes with $w_{\min}/w_{\max} \leq 2/3$}

In this section, we present a family of  minimal ternary linear codes with 
$w_{\min}/w_{\max} \leq 2/3$ with the help of Theorem \ref{thm-1stconminimal}. 
Before doing this, we recall that when $q=3$, the
Lloyd polynomial in (\ref{eqn-lylod-general}) becomes 
\begin{eqnarray}\label{eqn-lylod-ternary}
\Psi_{k}(x,m)=\sum_{t=0}^{k}K_t(x,m)=\sum_{t=0}^k\sum_{j=0}^t (-1)^j2^{t-j} \binom{x}{j} \binom{m-x}{t-j}.
\end{eqnarray}

For a positive integer $k$ with $1\leq k \leq m$, let $S(m,k)$ denote the set of vectors in $\gf(3)^{m}\setminus \{\mathbf{0}\}$ with Hamming weight at most $k$.
It is clear that
$$|S(m,k)|=\sum_{j=1}^{k}2^{j}\binom{m}{j}.$$
Define a function $g_{(m,k)}$ from $\gf(3)^{m}$ to $\gf(3)$ as
\begin{eqnarray}\label{eqn-myfunc}
g_{(m,k)}(x)=
\left\{
\begin{array}{ll}
1 & \mbox{ if } x\in S(m,k),\\
0 & \mbox{ otherwise. }
\end{array}
\right.
\end{eqnarray}

Using $g_{(m,k)}$ to replace the function $f$ in (\ref{eqn-mycode}), we automatically obtain a ternary linear code
$\C_{g_{(m,k)}}$. The parameters and weight distribution of $\C_{g_{(m,k)}}$ are given as follows.

\begin{theorem}\label{thm-weightdistribution}
The ternary code $\C_{g_{(m,k)}}$ has length $3^m-1$, dimension $m+1$, and the weight distribution in Table
\ref{tab-1}, where $\Psi_{k}(x,m)$ is the Lloyd polynomial given by (\ref{eqn-lylod-ternary}).
\end{theorem}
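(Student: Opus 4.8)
The plan is to apply Theorem \ref{thm-codeparam} with $f=g_{(m,k)}$, so that the entire proof reduces to computing the real parts of the Walsh coefficients $\hat g_{(m,k)}(w)$ and then reading off multiplicities from the multiset (\ref{multiset}). First I would verify the hypothesis of Theorem \ref{thm-codeparam}, namely that $g_{(m,k)}(x)\neq w\cdot x$ for every $w\in\gf(3)^m$: the function $g_{(m,k)}$ takes only the values $0$ and $1$ but is not identically zero since $S(m,k)\neq\emptyset$ for $k\geq 1$, while a nonzero linear form $w\cdot x$ attains the value $2$ and the zero form is identically $0$; hence no linear form equals $g_{(m,k)}$. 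This already yields length $3^m-1$ and dimension $m+1$.

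The heart of the argument is the evaluation of $\hat g_{(m,k)}(w)$. Splitting the Walsh sum over $S(m,k)$ and its complement and collecting terms gives
$$\hat g_{(m,k)}(w)=\sum_{x\in\gf(3)^m}\zeta_3^{-w\cdot x}+(\zeta_3-1)\sum_{x\in S(m,k)}\zeta_3^{-w\cdot x}.$$
The first sum equals $3^m$ if $w=\bzero$ and $0$ otherwise. For the second sum I would partition $S(m,k)$ by Hamming weight and apply Lemma \ref{lem-krawweight} with $\bbu=-w$ (using $\wt(-w)=\wt(w)$), which turns each inner sum into a Krawtchouk polynomial; summing $t$ from $1$ to $k$ and invoking $K_0(x,m)=1$ together with the definition (\ref{eqn-lylod-ternary}) of the Lloyd polynomial yields $\sum_{x\in S(m,k)}\zeta_3^{-w\cdot x}=\Psi_k(\wt(w),m)-1$. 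Therefore
$$\hat g_{(m,k)}(w)=\begin{cases}3^m+(\zeta_3-1)\bigl(\Psi_k(0,m)-1\bigr),& w=\bzero,\\[1mm](\zeta_3-1)\bigl(\Psi_k(\wt(w),m)-1\bigr),& w\neq\bzero.\end{cases}$$
Since $\Psi_k(\wt(w),m)$ is an integer and $\re(\zeta_3-1)=-\tfrac32$, taking real parts gives $\re(\hat g_{(m,k)}(\bzero))=3^m-\tfrac32\bigl(\Psi_k(0,m)-1\bigr)$ and $\re(\hat g_{(m,k)}(w))=-\tfrac32\bigl(\Psi_k(\wt(w),m)-1\bigr)$ for $w\neq\bzero$.

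Finally I would substitute these values into the multiset (\ref{multiset}). For $u\in\gf(3)^*$ and $v=\bzero$ the weight $2\bigl(3^{m-1}-\re(\hat g_{(m,k)}(\bzero))/3\bigr)$ collapses, after the $3^{m-1}$ terms cancel, to $\Psi_k(0,m)-1$, which by Lemma \ref{thm-ms21}(1) equals $\sum_{t=1}^k 2^t\binom{m}{t}=|S(m,k)|$, occurring with multiplicity $2$. For $u\in\gf(3)^*$ and $\wt(v)=j$ with $1\leq j\leq m$ the weight becomes $2\cdot 3^{m-1}+\Psi_k(j,m)-1$, and since there are $2^j\binom{m}{j}$ vectors of weight $j$ and two admissible choices of $u$, its multiplicity is $2^{j+1}\binom{m}{j}$. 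Adjoining the weight $3^m-3^{m-1}$ with multiplicity $3^m-1$ coming from $u=0,v\neq\bzero$, and the zero codeword, gives precisely Table \ref{tab-1}. The computation carries no deep obstacle; the point requiring care is that the $v=\bzero$ entry is genuinely distinct from (and not the $j=0$ specialization of) the $\wt(v)=j$ entries, owing to the extra $3^m$ term in $\hat g_{(m,k)}(\bzero)$, and one must track the multiplicities carefully and, if Table \ref{tab-1} lists the nonzero weights as pairwise distinct, note or tolerate possible coincidences among $3^m-3^{m-1}$, $\Psi_k(0,m)-1$, and the values $2\cdot 3^{m-1}+\Psi_k(j,m)-1$.
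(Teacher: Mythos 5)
Your proposal is correct and follows essentially the same route as the paper: apply Theorem \ref{thm-codeparam}, split the Walsh sum over $S(m,k)$ and its complement, convert the inner sums to Krawtchouk polynomials via Lemma \ref{lem-krawweight}, and read the weights and multiplicities off the multiset (\ref{multiset}). The only additions are your explicit check that $g_{(m,k)}(x)\neq w\cdot x$ (which the paper leaves implicit) and your remark about possible weight coincidences (which the paper defers to the remark following the theorem); neither changes the argument.
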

\begin{proof}
From the definition of $g_{(m,k)}$ in Equation (\ref{eqn-myfunc}),
\begin{eqnarray*}
\widehat{g}_{(m,k)}(w)&=&\sum_{x\in \gf(3)^{m}}\zeta_{3}^{g_{(m,k)}(x)-w\cdot x}\\
&=&\sum_{x\in S(m,k)}\zeta_{3}^{1-w\cdot x}+\sum_{x\in \gf(3)^{m}\setminus S(m,k)}\zeta_{3}^{-w\cdot x}\\
&=&\sum_{x\in \gf(3)^{m}}\zeta_{3}^{-w\cdot x}+(\zeta_3-1)\sum_{x\in S(m,k)}\zeta_{3}^{-w\cdot x}
\end{eqnarray*}
for $w\in \gf(3)^{m}$. If $w=\bzero$, then
$$\widehat{g}_{(m,k)}(\bzero)=3^{m}+(\zeta_3-1)\sum_{j=1}^{k}2^{j}\binom{m}{j}$$
and
$$\re\left(\widehat{g}_{(m,k)}(\bzero)\right)=3^{m}-\frac{3}{2}\sum_{j=1}^{k}2^{j}\binom{m}{j}.$$
If $w\neq \bzero$ with $\wt(w)=i$, then by Lemma  \ref{lem-krawweight} we have
\begin{eqnarray*}
\widehat{g}_{(m,k)}(w) = (\zeta_3-1)\sum_{t=1}^{k}K_t(i,m) 
= (\zeta_3-1)(\Psi_{k}(i,m)-1)
\end{eqnarray*}
for $q=3$. Thus
$$\re\left(\widehat{g}_{(m,k)}(w)\right)=-\frac{3}{2}(\Psi_{k}(i,m)-1).$$
Then the weight distribution follows from the proof of Theorem \ref{thm-codeparam}.
\end{proof}

\begin{rem}
It is noticed that the {Lloyd polynomial} $\Psi_{k}(x,m)$ may take the same
value for different values of $x$. Therefore, the set $\C_{g_{(m,k)}}$ in Theorem \ref{thm-weightdistribution}
is a ternary linear code with at most $m+2$ weights. For example, when $k=2$ and $m=5$, 
$$\Psi_{2}(x,5)=\frac{9x^2}{2}-\frac{63x}{2}+51.$$
It is easy to see that $\Psi_{2}(x,5)=6$ for $x\in \{2,5\}$, and 
$\Psi_{2}(x,5)=-3$ for $x\in \{3,4\}$. Thus the set $\C_{g_{(7,2)}}$ in Theorem \ref{thm-weightdistribution}
is a five-weight (instead of seven-weight) linear code. Similarly, when $k=2$ and $m=7$,
it is easily verified that the Lloyd polynomial $\Psi_{2}(x,7)$ takes different values when $x$ runs 
from $1$ to $7$. Therefore, the set $\C_{g_{(7,2)}}$ in Theorem \ref{thm-weightdistribution}
is a nine-weight linear code.  
\end{rem} 

\begin{table}[ht]
\center
\caption{Weight distribution}\label{tab-1}
{
\begin{tabular}{lr}
\hline
Weight $w$    & No. of codewords $A_w$  \\ \hline
$0$          & $1$ \\
$3^{m}-3^{m-1} + \Psi_{k}(i,m)-1$        & $2^{i+1}\binom{m}{i}$ \\
        &  \ $1 \leq i \leq m$ \\
$\sum_{j=1}^{k}2^{j}\binom{m}{j}$  & $2$ \\
$3^{m}-3^{m-1}$      & $3^m-1$ \\
\hline
\end{tabular}
}
\end{table}

\begin{corollary}\label{cor2}
Let $m,k$ be integers with $m\geq 5$ and $2\leq k \leq \lfloor\frac{m-1}{2}\rfloor$.  Then the linear code $\C_{g_{(m,k)}}$ in Theorem \ref{thm-weightdistribution} has parameters
$$\left[3^{m}-1,m+1,\sum_{j=1}^{k}2^{j}\binom{m}{j}\right].$$ Furthermore, $w_{\min}/w_{\max}\leq 2/3$ if and only if
$$3\sum_{j=1}^{k}2^{j}\binom{m}{j}\leq 2(3^{m}-3^{m-1})+2^{k+1}\binom{m-1}{k}-2.$$
\end{corollary}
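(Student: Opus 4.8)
The plan is to derive both the parameters and the ratio condition directly from the weight distribution in Table \ref{tab-1}, which is already established by Theorem \ref{thm-weightdistribution}. First I would identify the three candidate weights appearing in the table: the family $W_i := 3^m - 3^{m-1} + \Psi_k(i,m) - 1$ for $1 \le i \le m$, the single weight $W_0 := \sum_{j=1}^k 2^j \binom{m}{j}$, and the constant weight $3^m - 3^{m-1}$. To pin down $w_{\min}$ and $w_{\max}$ I need to understand how $\Psi_k(i,m)$ behaves as a function of $i$. By Lemma \ref{lem-Lloydpolynomial} we have $\Psi_k(i,m) = K_k(i-1,m-1)$, so the relevant object is the Krawtchouk polynomial $K_k(x,m-1)$ for $x$ ranging over $0,1,\dots,m-1$.

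The key monotonicity/extremal facts I would extract are: $\Psi_k(1,m) = K_k(0,m-1) = 2^k \binom{m-1}{k}$ is the maximum of $|\Psi_k(i,m)|$ over $1\le i\le m$ (this is exactly Corollary \ref{cor1}, with $q=3$), so $W_i \le 3^m - 3^{m-1} + 2^k\binom{m-1}{k} - 1 =: W_1$ for all $i$; hence $w_{\max} = \max\{W_1, 3^m-3^{m-1}, W_0\}$. For the minimum, I would argue that $W_0 \le W_i$ for every $i$ and $W_0 < 3^m - 3^{m-1}$, so that $w_{\min} = W_0 = \sum_{j=1}^k 2^j\binom{m}{j}$, giving the stated minimum distance $d = \sum_{j=1}^k 2^j\binom{m}{j}$. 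The inequality $W_0 < 3^m - 3^{m-1} = 2\cdot 3^{m-1}$ under the hypothesis $2 \le k \le \lfloor (m-1)/2 \rfloor$ is a binomial-sum estimate: $\sum_{j=1}^k 2^j\binom{m}{j}$ is dominated by its top term up to a constant factor, and $2^k\binom{m}{k}$ with $k \le (m-1)/2$ is well below $3^{m-1}$; this can be checked by comparing with the full expansion $\sum_{j=0}^m 2^j\binom{m}{j} = 3^m$ and bounding the tail. The comparison $W_0 \le W_i$, equivalently $\sum_{j=1}^k 2^j\binom{m}{j} \le 2\cdot 3^{m-1} + \Psi_k(i,m) - 1$, follows once we know $\Psi_k(i,m) \ge -(2^k\binom{m-1}{k} - 1)$ from Corollary \ref{cor1} and combine it with the same tail estimate; I would present this as: $W_0 < 2\cdot 3^{m-1} - 2^k\binom{m-1}{k} + 1 \le W_i$, where the first inequality is the binomial estimate and the second is Corollary \ref{cor1}.

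Once $w_{\min}$ and $w_{\max}$ are identified, the final equivalence is essentially bookkeeping. With $w_{\min} = W_0$ fixed, we have $w_{\max} = W_1 = 2\cdot 3^{m-1} + 2^k\binom{m-1}{k} - 1$, because $W_1 \ge 3^m - 3^{m-1}$ (as $2^k\binom{m-1}{k} \ge 1$) and $W_1 > W_0$. Then $w_{\min}/w_{\max} \le 2/3$ is equivalent to $3 w_{\min} \le 2 w_{\max}$, i.e.
$$
3\sum_{j=1}^{k}2^{j}\binom{m}{j} \le 2\Bigl(2\cdot 3^{m-1} + 2^{k}\binom{m-1}{k} - 1\Bigr) = 2(3^m - 3^{m-1}) + 2^{k+1}\binom{m-1}{k} - 2,
$$
which is exactly the claimed condition.

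The main obstacle I anticipate is the step showing $w_{\min} = W_0$, specifically verifying $W_0 < 3^m - 3^{m-1}$ and $W_0 \le W_i$ uniformly in $i$ under the constraint $2 \le k \le \lfloor(m-1)/2\rfloor$. This requires a clean estimate of the partial binomial sum $\sum_{j=1}^k 2^j\binom{m}{j}$ against $3^{m-1}$ and against $2^k\binom{m-1}{k}$; the ratio $\binom{m}{j+1}\binom{m}{j}^{-1} = (m-j)/(j+1)$ grows while $j \le k \le (m-1)/2$, so each term is at least roughly triple the previous, which lets one bound the whole sum by a geometric-type series and compare it with $3^{m-1}$. Making the constant $2/3$ not appear prematurely — i.e. keeping track that we only claim an \emph{equivalence} rather than an absolute inequality — is the reason the corollary is phrased with "if and only if"; the actual numerical verification that the condition holds for suitable $(m,k)$ is deferred to the examples that follow. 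I would be careful to state that Corollary \ref{cor1} requires $1 \le k \le m-1$, which is guaranteed here since $k \le \lfloor(m-1)/2\rfloor \le m-1$ for $m \ge 5$.
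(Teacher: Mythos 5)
Your proposal follows essentially the same route as the paper's proof: read off the three weight families from Table \ref{tab-1}, use Corollary \ref{cor1} (and its tightness at $i=1$) to identify $w_{\max}=\w(1)=2\cdot 3^{m-1}+2^{k}\binom{m-1}{k}-1$ and to reduce $w_{\min}=\w'=\sum_{j=1}^{k}2^{j}\binom{m}{j}$ to a partial-binomial-sum estimate against $2\cdot 3^{m-1}-2^{k}\binom{m-1}{k}$, and then clear denominators in $w_{\min}/w_{\max}\leq 2/3$ to get the stated condition. The only quibble is a constant slip in your intermediate bound: Corollary \ref{cor1} gives $\Psi_{k}(i,m)\geq -2^{k}\binom{m-1}{k}$, so what you must show is $W_0< 2\cdot 3^{m-1}-2^{k}\binom{m-1}{k}-1$ (not $+1$), which still holds with ample slack via exactly the chain of binomial inequalities the paper uses.
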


\begin{proof}
By Table \ref{tab-1}, we denote all the nonzero weights in $\C_{g_{(m,k)}}$ as
\begin{eqnarray*}
\left\{
\begin{array}{ll}
\w(i)=3^{m}-3^{m-1} + \Psi_{k}(i,m)-1,\ 1\leq i \leq m,\\
\w'=\sum_{j=1}^{k}2^{j}\binom{m}{j},\\
\w''=3^{m}-3^{m-1}.
\end{array}
\right.
\end{eqnarray*}
For $1\leq i \leq m$, by Corollary \ref{cor1}, we deduce that
\begin{eqnarray}\label{eqn-proof1}
\nonumber \w(i)&=&3^{m}-3^{m-1} + \Psi_{k}(i,m)-1\\
\nonumber & \geq & 3^{m}-3^{m-1}-2^{k}\binom{m-1}{k}-1\\
\nonumber&=&2\times 3^{m-1}-2^{k}\binom{m-1}{k}-1\\
\nonumber&=&2\times(2+1)^{m-1}-2^{k}\binom{m-1}{k}-1\\
&=&2\times\sum_{j=0}^{m-1}2^{j}\binom{m-1}{j}-2^{k}\binom{m-1}{k}-1. 
\end{eqnarray}
Note that
\begin{eqnarray}\label{eqn-proof2}
 \w' =\sum_{j=1}^{k}2^{j}\binom{m}{j} 
= \sum_{j=1}^{k}2^{j}\binom{m-1}{j}+\sum_{j=1}^{k}2^{j}\binom{m-1}{j-1}.
\end{eqnarray}
Since $2\leq k \leq \lfloor\frac{m-1}{2}\rfloor$, we have 
\begin{eqnarray*}
\lefteqn{ 2\times\sum_{j=0}^{m-1}2^{j}\binom{m-1}{j}-2^{k}\binom{m-1}{k}-1 } \\
&=&\sum_{j=0}^{m-1}2^{j}\binom{m-1}{j}+\sum_{j=0}^{m-1}2^{j}\binom{m-1}{j}-2^{k}\binom{m-1}{k}-1\\
&>& \sum_{j=0}^{k}2^{j}\binom{m-1}{j}+\sum_{j=1}^{m-1}2^{j}\binom{m-1}{j}-2^{k}\binom{m-1}{k}-1\\
&=&\sum_{j=1}^{k}2^{j}\binom{m-1}{j}+\sum_{j=1}^{k-1}2^{j}\binom{m-1}{j}+\sum_{j=k+1}^{m-1}2^{j}\binom{m-1}{j}\\
&>&\sum_{j=1}^{k}2^{j}\binom{m-1}{j}+\sum_{j=1}^{k-1}2^{j}\binom{m-1}{j-1}+2^{k+1}\binom{m-1}{k+1}\\
&>&\sum_{j=1}^{k}2^{j}\binom{m-1}{j}+\sum_{j=1}^{k-1}2^{j}\binom{m-1}{j-1}+2^{k}\binom{m-1}{k-1}\\
&=&\sum_{j=1}^{k}2^{j}\binom{m-1}{j}+\sum_{j=1}^{k}2^{j}\binom{m-1}{j-1}.
\end{eqnarray*}
It then follows from  Equations (\ref{eqn-proof1}) and (\ref{eqn-proof2}) that
\begin{eqnarray*}
\w(i)>\w'\mbox{ for all }1\leq i \leq m.
\end{eqnarray*}
From the discussions above, we also have
\begin{eqnarray*}
\w''=3^{m}-3^{m-1}>\w'.
\end{eqnarray*}
Hence, the minimum Hamming weight of $\C_{g_{(m,k)}}$ is given by 
$ 
w_{\min}=\w'.
$ 
According to Corollary \ref{cor1}, the maximum Hamming weight of $\C_{g_{(m,k)}}$ is given by
$$
w_{\max}=\w(1)=3^{m}-3^{m-1} + \Psi_{k}(1,m)-1=3^{m}-3^{m-1}+2^{k}\binom{m-1}{k}-1.
$$
This completes the proof.
\end{proof}

The following lemma will be used to prove the minimality of the linear code in Theorem \ref{thm-weightdistribution}.

\begin{lemma}\label{lem-last}
Let $m,k$ be integers with $m\geq 5$ and $2\leq k \leq \lfloor\frac{m-1}{2}\rfloor$.  Then we have
$$\sum_{j=1}^{k}2^{j}\binom{m}{j}\neq -2(\Psi_{k}(i,m)-1)\mbox{ for all }1\leq i \leq m.$$
\end{lemma}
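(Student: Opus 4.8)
The plan is to read $\Psi_{k}(i,m)$ as a character sum over a Hamming ball and to reduce the asserted non-coincidence to the presence of a single short vector orthogonal to a fixed word. Put $W=\sum_{j=1}^{k}2^{j}\binom{m}{j}$, fix $i$ with $1\le i\le m$, and fix any $\bbu\in\mathbb{Z}_3^{m}$ with $\wt(\bbu)=i$. Summing the identity of Lemma \ref{lem-krawweight} over $t=0,1,\dots,k$ and recalling (\ref{eqn-lylod-ternary}),
\[
\Psi_{k}(i,m)=\sum_{t=0}^{k}K_{t}(i,m)=\sum_{\substack{\bbv\in\mathbb{Z}_3^{m}\\ \wt(\bbv)\le k}}\zeta_{3}^{\bbu\cdot\bbv},
\]
which is a sum over the ball of radius $k$ about $\bzero$, a set of $1+W$ elements.

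Next I would sort this sum by the value of $\bbu\cdot\bbv$ modulo $3$. For $s\in\{0,1,2\}$ let $N_{s}$ count the $\bbv$ with $\wt(\bbv)\le k$ and $\bbu\cdot\bbv\equiv s\pmod 3$, so that $N_{0}+N_{1}+N_{2}=1+W$. The involution $\bbv\mapsto-\bbv$ preserves Hamming weight and interchanges the residue classes $s=1$ and $s=2$, hence $N_{1}=N_{2}$ (which also re-confirms that $\Psi_{k}(i,m)$ is real). Using $1+\zeta_{3}+\zeta_{3}^{2}=0$,
\[
\Psi_{k}(i,m)=N_{0}-\frac{N_{1}+N_{2}}{2}=N_{0}-\frac{(1+W)-N_{0}}{2}=\frac{3N_{0}-1-W}{2}.
\]
Because the map $N_{0}\mapsto(3N_{0}-1-W)/2$ is injective, a one-line rearrangement shows that the forbidden equality $W=-2(\Psi_{k}(i,m)-1)$ holds if and only if $N_{0}=1$. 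Thus the lemma is equivalent to proving $N_{0}\ge 2$ for every $i\in\{1,\dots,m\}$.

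Finally I would produce, for each $i$, a nonzero $\bbv$ with $\wt(\bbv)\le k$ and $\bbu\cdot\bbv\equiv 0\pmod 3$; since $\bzero$ already lies in the class $s=0$, this forces $N_{0}\ge 2$. If $i=1$, the support of $\bbu$ is a single coordinate, and any unit vector $\bbv$ supported off it satisfies $\bbu\cdot\bbv=0$ with $\wt(\bbv)=1\le k$. If $i\ge 2$, choose two support coordinates $p,q$ of $\bbu$ and set $v_{p}=1$, $v_{q}=-u_{p}u_{q}^{-1}$, and all remaining entries equal to $0$; then $v_{q}\ne 0$, $\wt(\bbv)=2\le k$, and $\bbu\cdot\bbv=u_{p}+u_{q}v_{q}=0$. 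In either case $N_{0}\ge 2$, which proves the lemma.

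I expect the only subtle steps to be the identity $N_{1}=N_{2}$, which is what collapses the complex character sum to the displayed real closed form, and the bookkeeping that rewrites $W=-2(\Psi_{k}(i,m)-1)$ as $N_{0}=1$; the construction of the short orthogonal vector is then routine and is precisely where the hypotheses $k\ge 2$ and $m>k$ enter. It is worth stressing that the magnitude bound $|\Psi_{k}(i,m)|\le 2^{k}\binom{m-1}{k}$ of Corollary \ref{cor1} is by itself too weak to separate $\Psi_{k}(i,m)$ from the target value $1-W/2$ for small $m$, so it is the exact combinatorial identity for $\Psi_{k}(i,m)$, rather than a size estimate, that drives the argument.
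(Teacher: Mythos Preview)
Your argument is correct and takes a genuinely different route from the paper. The paper proves the strict inequality
\[
-\sum_{j=1}^{k}2^{j-1}\binom{m}{j}-\Psi_{k}(i,m)<-1
\]
by rewriting $\Psi_{k}(i,m)=K_{k}(i-1,m-1)$ via Lemma~\ref{lem-Lloydpolynomial}, expanding $2^{k-1}\binom{m-1}{k}$ through the Vandermonde convolution, and observing that in the resulting sum $\sum_{j=0}^{k}\binom{i-1}{j}\binom{m-i}{k-j}\bigl(2^{k-1}+(-1)^{j}2^{k-j}\bigr)$ every coefficient $2^{k-1}+(-1)^{j}2^{k-j}$ is nonnegative (zero only at $j=1$), so the whole expression exceeds~$1$. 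That is a size estimate, and the range $2\le k\le\lfloor (m-1)/2\rfloor$ together with $m\ge5$ is used to guarantee that enough terms survive.

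You instead interpret $\Psi_{k}(i,m)$ as the character sum over the radius-$k$ ball, collapse it via the involution $\bbv\mapsto-\bbv$ to the exact formula $\Psi_{k}(i,m)=(3N_{0}-1-W)/2$, and reduce the asserted non-coincidence to the purely combinatorial statement $N_{0}\ge2$, which you then settle by exhibiting a single nonzero short vector orthogonal to $\bbu$. This is more conceptual: it explains \emph{why} equality is impossible (the ball always contains a second vector in the kernel of $\bbu\cdot(-)$), rather than separating two quantities by magnitude. It also makes the hypotheses transparent: only $k\ge2$ (for the $i\ge2$ construction) and $m\ge2$ (for the $i=1$ construction) are actually used, so your proof works in a slightly wider range than stated. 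The paper's approach, by contrast, stays closer to the Krawtchouk machinery already set up and avoids introducing the counting variables $N_{s}$, at the cost of a heavier final computation.
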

\begin{proof}
Note that the inequality $$\sum_{j=1}^{k}2^{j}\binom{m}{j}\neq -2(\Psi_{k}(i,m)-1)\mbox{ for all }1\leq i \leq m$$ is equivalent to
$$-\sum_{j=1}^{k}2^{j-1}\binom{m}{j}-\Psi_{k}(i,m)\neq -1\mbox{ for all }1\leq i \leq m.$$
By Lemma \ref{lem-Lloydpolynomial}, it is sufficient to prove that
\begin{eqnarray}\label{ineqn-proof-lastlemma}
-\sum_{j=1}^{k}2^{j-1}\binom{m}{j}-K_{k}(i-1,m-1)\neq -1 \mbox{ for all }1\leq i \leq m.
\end{eqnarray}
Note that
\begin{eqnarray*}
\lefteqn{ -\sum_{j=1}^{k}2^{j-1}\binom{m}{j}-K_{k}(i-1,m-1) } \\
&=&-\sum_{j=1}^{k}2^{j-1}\binom{m-1}{j}-\sum_{j=1}^{k}2^{j-1}\binom{m-1}{j-1}-K_{k}(i-1,m-1)\\
&<&-2^{k-1}\binom{m-1}{k}-K_{k}(i-1,m-1).
\end{eqnarray*}
This means that (\ref{ineqn-proof-lastlemma}) holds provided that
$$-2^{k-1}\binom{m-1}{k}-K_{k}(i-1,m-1)<-1\mbox{ for all }1\leq i \leq m.$$
In fact, by the Vandermonde convolution formula and the definition of the Krawthouk polynomial, we have
\begin{eqnarray*}
\lefteqn{ -2^{k-1}\binom{m-1}{k}-K_{k}(i-1,m-1) } \\
&=&-2^{k-1}\sum_{j=0}^{k}\binom{i-1}{j}\binom{m-i}{k-j}-\sum_{j=0}^k (-1)^j2^{k-j} \binom{i-1}{j} \binom{m-i}{k-j}\\
&=&-\sum_{j=0}^k  \binom{i-1}{j} \binom{m-i}{k-j}\left(2^{k-1}+(-1)^j2^{k-j}\right)\\
&<&-1,
\end{eqnarray*}
where we used the fact that $2\leq k \leq \lfloor\frac{m-1}{2}\rfloor$ and $m\geq 5$. This completes the proof of 
this lemma. 
\end{proof}

The following result shows that the linear code in Theorem \ref{thm-weightdistribution} 
is minimal and violates the Ashikhmin-Barg condition in many cases.  

\begin{theorem}\label{mainth}
Let $m,k$ be integers with $m\geq 5$ and $2\leq k \leq \lfloor\frac{m-1}{2}\rfloor$. Then the linear code $\C_{g_{(m,k)}}$ is minimal and has parameters
$$\left[3^{m}-1,m+1,\sum_{j=1}^{k}2^{j}\binom{m}{j}\right].$$ Furthermore, $w_{\min}/w_{\max}\leq 2/3$ if and only if
$$3\sum_{j=1}^{k}2^{j}\binom{m}{j}\leq 2(3^{m}-3^{m-1})+2^{k+1}\binom{m-1}{k}-2.$$
\end{theorem}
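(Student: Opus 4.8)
The plan is to combine Theorem~\ref{thm-1stconminimal}, which reduces minimality of $\C_{g_{(m,k)}}$ to two inequalities on sums of three real parts $\re(\widehat g_{(m,k)}(w_j))$ over triples $w_1+w_2+w_3=\bzero$ of pairwise distinct vectors, with the explicit evaluation of $\re(\widehat g_{(m,k)}(w))$ obtained in the proof of Theorem~\ref{thm-weightdistribution}. Recall that $\re(\widehat g_{(m,k)}(\bzero))=3^{m}-\tfrac32\sum_{j=1}^k 2^j\binom mj$ and, for $w\neq\bzero$ with $\wt(w)=i$, $\re(\widehat g_{(m,k)}(w))=-\tfrac32(\Psi_k(i,m)-1)$. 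Write $R_0:=\re(\widehat g_{(m,k)}(\bzero))$ and $R(i):=-\tfrac32(\Psi_k(i,m)-1)$, and abbreviate $W:=\sum_{j=1}^k 2^j\binom mj$, so $R_0=3^m-\tfrac32 W$. The two forbidden equalities of Theorem~\ref{thm-1stconminimal} become, after multiplying by $-\tfrac23$, identities of the form (sums of two of the $R$-values) minus twice a third equal $3^m$, and (sum of three $R$-values) equal $3^m$.

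First I would dispose of the generic situation where $w_1,w_2,w_3$ are all nonzero, with weights $i_1,i_2,i_3$. Here the relevant quantities are $R(i_1)+R(i_2)-2R(i_3)$ and $R(i_1)+R(i_2)+R(i_3)$; I must show neither equals $3^m$. Using Corollary~\ref{cor1}, $|\Psi_k(i,m)|\le 2^k\binom{m-1}k$, so $|R(i)|\le \tfrac32(2^k\binom{m-1}k+1)$, and hence $|R(i_1)+R(i_2)+R(i_3)|$ and $|R(i_1)+R(i_2)-2R(i_3)|$ are each at most $6\cdot\tfrac32(2^k\binom{m-1}k+1)=9(2^k\binom{m-1}k+1)$. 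For $m\ge 5$ and $k\le\lfloor(m-1)/2\rfloor$ one checks $9(2^k\binom{m-1}k+1)<3^m$ — this is a clean binomial estimate of the same flavour as the one at the end of the proof of Corollary~\ref{cor2} (the quantity $2^k\binom{m-1}k$ is dwarfed by $2\cdot 3^{m-1}$). So the all-nonzero case never produces either forbidden equality.

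It remains to handle triples in which one vector is $\bzero$; say $w_3=\bzero$, so $w_1=-w_2\neq\bzero$ and $\wt(w_1)=\wt(w_2)=:i$ with $1\le i\le m$. Then the two expressions to be excluded from equalling $3^m$ are $2R(i)-2R_0$ and $2R(i)+R_0$. Since $R_0=3^m-\tfrac32 W$, the first equals $2R(i)-2\cdot 3^m+3W$, so it equals $3^m$ iff $2R(i)+3W=3\cdot 3^m$; but $2R(i)=-3(\Psi_k(i,m)-1)\le 3(2^k\binom{m-1}k+1)$ and $3W\le 3\cdot 2(3^{m-1}-1)<2\cdot 3^m$ for $m\ge 5$, and again the binomial bound on $2^k\binom{m-1}k$ makes the total strictly less than $3\cdot 3^m$, ruling this out. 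The second expression equals $3^m$ iff $2R(i)+R_0=3^m$, i.e. $-3(\Psi_k(i,m)-1)+3^m-\tfrac32 W=3^m$, i.e. $W=-2(\Psi_k(i,m)-1)$; this is \emph{exactly} the inequality excluded by Lemma~\ref{lem-last}, so it cannot hold. By symmetry the cases $w_1=\bzero$ or $w_2=\bzero$ give the same conditions. This exhausts all triples, so by Theorem~\ref{thm-1stconminimal} the code $\C_{g_{(m,k)}}$ is minimal. The parameters and the stated $w_{\min}/w_{\max}\le 2/3$ characterization are then immediate from Corollary~\ref{cor2}.

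The main obstacle I anticipate is the bookkeeping in matching each of the two forbidden equalities of Theorem~\ref{thm-1stconminimal} against the correct combination of $R_0$ and $R(i)$ in every sub-case of which vectors vanish — it is easy to mis-assign which of the three slots carries the $-2$ coefficient — together with verifying the binomial inequality $9(2^k\binom{m-1}k+1)<3^m$ (and its cousins) uniformly for all $m\ge5$ and $2\le k\le\lfloor(m-1)/2\rfloor$; the latter should follow by the same telescoping/expansion of $2\cdot 3^{m-1}=2(2+1)^{m-1}$ used in the proof of Corollary~\ref{cor2}, but one must be careful at the boundary $k=\lfloor(m-1)/2\rfloor$. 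The genuinely non-trivial input is Lemma~\ref{lem-last}, which is precisely what kills the one equality that the crude size bounds cannot.
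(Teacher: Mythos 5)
Your overall strategy coincides with the paper's: reduce minimality to the two forbidden equalities of Theorem~\ref{thm-1stconminimal}, substitute the explicit values $R_0=\re(\widehat g_{(m,k)}(\bzero))$ and $R(i)$, dispose of all-nonzero triples by a size bound, and reserve Lemma~\ref{lem-last} for the one equality that survives the size bound; the parameters come from Corollary~\ref{cor2} in both. However, there is a genuine gap in your treatment of triples containing $\bzero$: the first forbidden equality is \emph{not} symmetric in $w_1,w_2,w_3$, since the $-2$ coefficient is attached to $w_3$, and Theorem~\ref{thm-1stconminimal} quantifies over all assignments. You only examine $w_3=\bzero$, which gives $2R(i)-2R_0\neq 3^m$, and then assert that the assignments $w_1=\bzero$ or $w_2=\bzero$ ``give the same conditions by symmetry.'' They do not: with $w_1=\bzero$ and $\wt(w_2)=\wt(w_3)=i$ the expression is $R_0+R(i)-2R(i)=R_0-R(i)$, and the condition to be excluded is $R_0-R(i)\neq 3^m$, equivalently $\Psi_k(i,m)\neq\sum_{j=0}^{k}2^{j}\binom{m}{j}$. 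This case is nowhere verified in your argument. It is true and easy --- by Corollary~\ref{cor1}, $\Psi_k(i,m)\leq 2^{k}\binom{m-1}{k}<2^{k}\binom{m}{k}<\sum_{j=0}^{k}2^{j}\binom{m}{j}$, which is exactly the paper's second sub-case of its Case~1 --- but as written your case analysis is incomplete, and this is precisely the mis-assignment of the $-2$ slot that you flagged as the main risk.

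A secondary, non-fatal issue: in the all-nonzero case you bound both expressions by $6\cdot\tfrac32\left(2^{k}\binom{m-1}{k}+1\right)=9\left(2^{k}\binom{m-1}{k}+1\right)$, but the relevant coefficient sums are $3$ and $4$, not $6$; taking $4$ gives the bound $6\left(2^{k}\binom{m-1}{k}+1\right)$, and the needed inequality $6\left(2^{k}\binom{m-1}{k}+1\right)<3^{m}$ is exactly what the paper proves by the telescoping expansion of $3^{m-1}=\sum_{j=0}^{m-1}2^{j}\binom{m-1}{j}$. Your looser requirement $9\left(2^{k}\binom{m-1}{k}+1\right)<3^{m}$ does hold for all admissible $(m,k)$, but it is tight at the boundary $m=5$, $k=2$ (it reads $225<243$), so it cannot be waved through as ``the same flavour''; you should either prove it separately or simply use the sharper coefficient count.
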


\begin{proof}
According to  Corollary \ref{cor2}, we only need to prove that $\C_{g_{(m,k)}}$ is minimal.
From the proof of Theorem \ref{thm-weightdistribution}, we have
\begin{eqnarray}\label{sys}
\re\left(\widehat{g}_{(m,k)}(w)\right)=
\left\{
\begin{array}{ll}
3^{m}-\frac{3}{2}\sum_{j=1}^{k}2^{j}\binom{m}{j} & \mbox{ if } w=\bzero,\\
-\frac{3}{2}(\Psi_{k}(i,m)-1) & \mbox{ if }\wt(w)=i>0.
\end{array}
\right.
\end{eqnarray}
Theorem \ref{thm-1stconminimal} implies that $\C_{g_{(m,k)}}$ is minimal if and only if
\begin{eqnarray}\label{ineqn-1}
\re(\hat{g}_{(m,k)}(w_1))+\re(\hat{g}_{(m,k)}(w_2))-2\re(\hat{g}_{(m,k)}(w_3))\neq3^{m}
\end{eqnarray}
and
\begin{eqnarray}\label{ineqn-2}
\re(\hat{g}_{(m,k)}(w_1))+\re(\hat{g}_{(m,k)}(w_2))+\re(\hat{g}_{(m,k)}(w_3))\neq3^{m}
\end{eqnarray}
for any pairwise distinct  vectors $w_1,w_2,w_3\in \gf(3)^{m}$ satisfying $w_1+w_2+w_3=\bzero$.
We distinguish between the following two cases to show that (\ref{ineqn-1}) and (\ref{ineqn-2}) hold for the claimed vectors.

\subsubsection*{Case 1: Assume that one of $w_1,w_2,w_3$ is $\bzero$.}

We firstly consider Inequality (\ref{ineqn-2}). Without loss of generality, we assume that $w_1=\bzero$ and then $w_2=-w_3\neq \bzero$, where $\wt(w_2)=\wt(w_3)=i$ and $1\leq i \leq m$. Then by Equation (\ref{sys}), Inequality  (\ref{ineqn-2}) is equivalent to
    $$\sum_{j=1}^{k}2^{j}\binom{m}{j}\neq -2(\Psi_{k}(i,m)-1)\mbox{ for all }1\leq i \leq m,$$
     which  holds by Lemma \ref{lem-last}. Next we consider Inequality (\ref{ineqn-1}) in two cases.
    \begin{enumerate}
    \item If $w_3=\bzero$, then $\wt(w_1)=\wt(w_2)=i$ with $1\leq i \leq m$. Then by  Equation (\ref{sys}), Inequality  (\ref{ineqn-1}) is equivalent to
        \begin{eqnarray}\label{ineqn-3}\sum_{j=0}^{k}2^{j}\binom{m}{j}-\Psi_{k}(i,m)\neq 3^{m}\mbox{ for all }1\leq i \leq m.\end{eqnarray}
        Due to Corollary \ref{cor1} for $q=3$, we deduce that
        \begin{eqnarray*}
        \left|\sum_{j=0}^{k}2^{j}\binom{m}{j}-\Psi_{k}(i,m)\right|&\leq & \sum_{j=0}^{k}2^{j}\binom{m}{j}+|\Psi_{k}(i,m)|\\
        &\leq &\sum_{j=0}^{k}2^{j}\binom{m}{j}+2^{k}\binom{m-1}{k}\\
        &<&\sum_{j=0}^{k}2^{j}\binom{m}{j}+2^{k}\binom{m-1}{k}+2^{k}\binom{m-1}{k+1}\\
        &=&\sum_{j=0}^{k}2^{j}\binom{m}{j}+2^{k}\binom{m}{k+1}\\
        &<&\sum_{j=0}^{k}2^{j}\binom{m}{j}+2^{k+1}\binom{m}{k+1}\\
        &<&\sum_{j=0}^{m}2^{j}\binom{m}{j}=3^{m}.
        \end{eqnarray*}
        Thus Inequality (\ref{ineqn-3}) holds and then Inequality  (\ref{ineqn-1}) holds.
    \item If one of $\wt(w_1)$ and $\wt(w_2)$ is $\bzero$, we assume that $w_1=\bzero$ without loss of generality. Then $\wt(w_2)=\wt(w_3)=i$ with $1\leq i \leq m$. Then by Equation (\ref{sys}), Inequality  (\ref{ineqn-1}) is equivalent to
                \begin{eqnarray}\label{ineqn-4}\sum_{j=0}^{k}2^{j}\binom{m}{j}\neq\Psi_{k}(i,m)\mbox{ for all }1\leq i \leq m.\end{eqnarray}
                Due to Corollary \ref{cor1} for $q=3$, we have
                $$\Psi_{k}(i,m)\leq 2^{k}\binom{m-1}{k}<2^{k}\binom{m}{k}<\sum_{j=0}^{k}2^{j}\binom{m}{j}.$$
                Thus Inequality (\ref{ineqn-4}) holds and then Inequality  (\ref{ineqn-1}) holds.
    \end{enumerate}
    In this case, (\ref{ineqn-1}) and (\ref{ineqn-2}) follow from the discussion above.

\subsubsection*{Case 2: Assume that all $w_1,w_2,w_3$ are nonzero.}

Due to Corollary \ref{cor1} for $q=3$ and Equation (\ref{sys}), we derive that
\begin{eqnarray*}
\left|\re(\hat{g}_{(m,k)}(w_1))+\re(\hat{g}_{(m,k)}(w_2))-2\re(\hat{g}_{(m,k)}(w_3))\right|\leq 6\times 2^{k}\binom{m-1}{k}+6
\end{eqnarray*}
and
\begin{eqnarray*}
\left|\re(\hat{g}_{(m,k)}(w_1))+\re(\hat{g}_{(m,k)}(w_2))+\re(\hat{g}_{(m,k)}(w_3))\right|\leq \frac{9}{2}\times 2^{k}\binom{m-1}{k}+\frac{9}{2}.
\end{eqnarray*}
To show that Inequalities (\ref{ineqn-1}) and (\ref{ineqn-2}) hold, it is sufficient to show that
$$6\times 2^{k}\binom{m-1}{k}+6<3^{m},$$
which is equivalent to
\begin{eqnarray*}
2^{k+1}\binom{m-1}{k}+2<3^{m-1}.
\end{eqnarray*}
Since $2\leq k \leq \lfloor\frac{m-1}{2}\rfloor$, we have
\begin{eqnarray*}
2^{k+1}\binom{m-1}{k}+2&=&2^{k}\binom{m-1}{k}+2^{k}\binom{m-1}{k}+2\\
&<&2^{k}\binom{m-1}{k}+2^{k+1}\binom{m-1}{k+1}+2\\
&<&\sum_{j=0}^{m-1}2^{j}\binom{m-1}{j}=3^{m-1}.
\end{eqnarray*}
Thus Inequalities (\ref{ineqn-1}) and (\ref{ineqn-2}) hold in this case.

Summarizing the discussion above completes the proof of this theorem.
\end{proof}

As a corollary of Theorem \ref{mainth}, one can easily derive the following.

\begin{corollary}\label{cor-infinite}
Let $k=2,\ m \geq 5$. Then  $\C_{g_{(m,2)}}$ in Theorem \ref{mainth} is a  minimal code with parameters
$$\left[3^{m}-1,m+1,\sum_{j=1}^{2}2^{j}\binom{m}{j}\right].$$
Furthermore,
$w_{\min}/w_{\max} < 2/3$.
\end{corollary}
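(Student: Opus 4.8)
The plan is to specialize Theorem \ref{mainth} to the case $k=2$ and then verify the inequality
$$3\sum_{j=1}^{2}2^{j}\binom{m}{j}\leq 2(3^{m}-3^{m-1})+2^{3}\binom{m-1}{2}-2$$
that appears in the ``Furthermore'' clause. Since $k=2$ clearly satisfies $2\leq k\leq\lfloor(m-1)/2\rfloor$ for every $m\geq 5$, Theorem \ref{mainth} immediately gives that $\C_{g_{(m,2)}}$ is a minimal code with the stated parameters $[3^m-1,\,m+1,\,\sum_{j=1}^{2}2^{j}\binom{m}{j}]$, so the only real content is the ratio bound, and even there we only need the strict inequality $w_{\min}/w_{\max}<2/3$ rather than just $\leq$.

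First I would write both sides as explicit polynomials in $m$. The left-hand side is $3(2m+4\binom{m}{2})=3(2m+2m(m-1))=6m^2$. For the right-hand side, $2(3^m-3^{m-1})=4\cdot 3^{m-1}$ and $2^{3}\binom{m-1}{2}=4(m-1)(m-2)$, so the bound to be verified becomes
$$6m^2 < 4\cdot 3^{m-1}+4(m-1)(m-2)-2.$$
Rearranging, it suffices to show $6m^2-4(m-1)(m-2)+2 < 4\cdot 3^{m-1}$, i.e. $2m^2+12m-6 < 4\cdot 3^{m-1}$, equivalently $m^2+6m-3 < 2\cdot 3^{m-1}$. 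This is a routine one-variable estimate: it holds already at $m=5$ (left side $52$, right side $162$), and since $3^{m-1}$ grows geometrically while $m^2+6m-3$ grows quadratically, an easy induction (or a direct comparison, e.g. $2\cdot 3^{m-1}\geq 2\cdot 3^{4}\cdot 3^{m-5}$ against a polynomial bound) closes the gap for all $m\geq 5$. Because the inequality is strict, Corollary \ref{cor2} (invoked through Theorem \ref{mainth}) yields $w_{\min}/w_{\max}<2/3$ rather than merely $\leq 2/3$.

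There is essentially no obstacle here: the statement is a corollary in the literal sense, obtained by plugging $k=2$ into Theorem \ref{mainth} and discharging a transparent numerical inequality. The only point requiring a word of care is confirming that the condition $2\leq k\leq\lfloor(m-1)/2\rfloor$ is met, which forces $m\geq 5$ and explains the hypothesis in the statement; this is exactly the range in which Corollary \ref{cor2} guarantees $w_{\min}=\w'=\sum_{j=1}^{2}2^{j}\binom{m}{j}$ and $w_{\max}=\w(1)=3^m-3^{m-1}+2^{2}\binom{m-1}{2}-1$, so the displayed ratio condition specializes to the polynomial inequality above and the proof concludes.
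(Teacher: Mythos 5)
Your proposal is correct and follows exactly the route the paper intends: the paper states this corollary without proof as an immediate consequence of Theorem \ref{mainth}, and your specialization to $k=2$ together with the reduction of the ratio condition to $m^{2}+6m-3<2\cdot 3^{m-1}$ (checked at $m=5$ and closed by the geometric-versus-quadratic growth comparison) is precisely the ``easy derivation'' being invoked. Your observation that the strict numerical inequality upgrades the conclusion to $w_{\min}/w_{\max}<2/3$ is also the right justification for the strict bound in the statement.
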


Corollary \ref{cor-infinite} demonstrates that the codes in Theorem \ref{mainth} contain an infinite family of ternary minimal linear codes with $w_{\min}/w_{\max} \leq 2/3$.  
We conclude this section with two examples computed by Magma, which are consistent with our theoretical results.

\begin{example}\label{exa-1}
The set $\C_{g_{(5,2)}}$ in Theorem \ref{mainth} is a  minimal code with parameters
$\left[242,\ 6,\ 50 \right]$ and weight enumerator
$$
1+  2z^{50} + 320z^{158} + 242z^{162} + 144z^{167} + 20 z^{185},
$$
where
$w_{\min}/w_{\max} = 50/185 < 2/3$. 
\end{example}

\begin{example}
The set $\C_{g_{(7,2)}}$ in Theorem \ref{mainth} is a  minimal code with parameters
$\left[2186,\ 8,\ 98 \right]$ and weight enumerator
$$
1+  2z^{98} + 1344^{1451} + 1120^{1454} + 896^{1457} + 2186 z^{1458}+560 z^{1466}+256 z^{1472}+168 z^{1487}+28 z^{1517},
$$
where
$w_{\min}/w_{\max} = 98/1517 < 2/3$.
\end{example}

\section{Concluding remarks}\label{sec-concluison}

In this paper, we derived a necessary and sufficient condition for linear codes over  
finite fields to be minimal. This condition generalizes the construction for the binary 
case given in  \cite{DHZ}. It enabled us to obtain a family of ternary minimal linear 
codes violating the   Ashikhmin-Barg condition. This is the first infinite family of nonbinary minimal linear codes violating the Ashikhmin-Barg condition. It would be 
interesting to construct more such infinite families of nonbinary minimal 
linear codes. 
The reader is cordially invited to join this adventure.

Finally, we mention that the construction of the ternary code in (\ref{eqn-mycode}) 
can be generalised to obtain $q$-ary codes, and the function $g_{m,k}(x)$ of 
(\ref{eqn-myfunc}) is also a function from $\gf(q)$ to $\gf(q)$ when the set 
$S(m,k)$ is defined to be the set of all vectors in $\gf(q)^m \setminus \{\bzero\}$ 
with Hamming weight at most $k$. 
In this case, the code $\C_{g(m,k)}$ is a linear code over $\gf(q)$. Experimental 
data show that the code $\C_{g(m,k)}$ over $\gf(q)$ could be minimal under certain conditions. However, it seems very hard to work out similar results for the general 
case $q$. We were able to handle only the ternary case in this paper.

\end{document}